\newtheorem{lemma}{Lemma}
\newcommand{\cS}{{\mathcal S}}
\begin{document}

\title{Delineating Parameter Unidentifiabilities in Complex Models}

\author{Dhruva V. Raman, James Anderson, Antonis Papachristodoulou}

\maketitle
\begin{abstract}
Scientists use mathematical modelling as a tool for understanding and predicting the properties of complex physical systems. In highly parameterised models there often exist relationships between parameters over which model predictions are identical, or nearly identical. These are known as structural or practical unidentifiabilities, respectively. They are hard to diagnose and make reliable parameter estimation from data impossible. They furthermore imply the existence of an underlying model simplification. We describe a scalable method for detecting unidentifiabilities, as well as the functional relations defining them, for generic models. This allows for model simplification, and appreciation of which parameters (or functions thereof) cannot be estimated from data. Our algorithm can identify features such as redundant mechanisms and fast timescale subsystems, as well as the regimes in parameter space over which such approximations are valid. We base our algorithm on a novel quantification of regional parametric sensitivity: multiscale sloppiness. Traditionally, the link between parametric sensitivity and the conditioning of the parameter estimation problem is made locally, through the Fisher Information Matrix. This is valid in the regime of infinitesimal measurement uncertainty. 
We demonstrate the duality between multiscale sloppiness and the geometry of confidence regions surrounding parameter estimates made where measurement uncertainty is non-negligible. Further theoretical relationships are provided linking multiscale sloppiness to the Likelihood-ratio test. From this, we show that a local sensitivity analysis (as typically done) is insufficient for determining the reliability of parameter estimation, even with simple (non)linear systems. Our algorithm can provide a tractable alternative. We finally apply our methods to a large-scale, benchmark Systems Biology model of NF-$\kappa$B, uncovering previously unknown unidentifiabilities.
\end{abstract}

\section{Introduction}
The increasing availability of computing power has motivated the mathematical modelling of complex systems in fields as diverse as systems biology, climate science, and economics. 
These models often incorporate many parameters, each  representing an uncertain or varying quantity affecting model output. In such models it is often possible to change individual parameters by an arbitrarily large amount without affecting observed model output, so long as particular relationships between parameters are preserved. The involved parameters are then \emph{structurally unidentifiable}, and cross-sections of parameter space preserving the aforementioned relationships are known as \emph{structural unidentifiabilities}. Determination of such unidentifiabilities is an important, but hard problem in complex models. On the one hand it is a necessary prerequisite to parameter estimation from data: unidentifiable parameters cannot be estimated, and an attempt to do so may result in misleading information. On the other hand, knowledge of unidentifiabilities can provide mechanistic insight. For instance, suppose we know that the observed dynamics of a chemical reaction network model are affected by the product of two rate constants, but not their individual values. Then both rate constants are unidentifiable, while the associated unidentifiability is the product of the rate constants. Knowledge of the latter means that we can rewrite one parameter in terms of the other, both simplifying the model and informing us of a structural system property. 

The determination of structurally unidentifiable parameters in nonlinear differential equation models has been an ongoing research topic for several decades. Multiple algebraic approaches, which directly analyse the system equations, have been suggested, e.g. \cite{Hermann1977, Pohjanpalo1978, Vajda1989b,Ljung1994,Xia2003,Meshkat2009}. Several such approaches rely on the fact that the identifiability problem can be viewed as a special case of the observability problem \cite{Hermann1977}, where parameters are considered as time-invariant states to be estimated. Research on observability in nonlinear systems often makes use of the Observability Rank Condition \cite{Hermann1977}, which analyses the properties (specifically the rank) of a matrix of Lie derivatives of model output over time to determine unobservable model states \cite{Letellier2005, Liu2013, Whalen2015}. Correspondingly, identifiability has been analysed using Differential Geometry-based approaches that exploit parametric symmetries inherent in the aforementioned matrix of Lie derivatives \cite{Vajda1989b, Margaria2001, Evans2002,  Denis-Vidal2000, Denis-Vidal2004, Yates2009}. Many of the approaches mentioned can potentially determine the exact functional form of the unidentifiabilities, but all suffer from a lack of general applicability and scalability, see e.g. \cite{Chis2011}. The aforementioned  issue has motivated the implementation of numerical approaches based on model simulation at multiple points in parameter space, e.g. \cite{Anguelova2007,Sedoglavic2008}. Such approaches can be applied to much larger models, but cannot deal with general nonlinearities, and are not guaranteed to find all unidentifiabilities. 

More recently, data-based approaches to identifiability analysis have been developed. The profile likelihood method \cite{Raue2009} is a scalable numerical method that detects both unidentifiable parameters and associated unidentifiabilities, and has found widespread popularity in the Systems Biology community. It is also able to detect \emph{practical unidentifiabilities} \cite{Vajda1989}, in which large parameter perturbations induce small but nonzero changes in model output. However, it relies on iteratively moving individual parameters, while each time re-optimising model output over the other parameters. As such, a separate analysis is required for each parameter, and only one-dimensional unidentifiabilities are detected. The approach of \cite{Hengl2007} can flag the probable existence of more general unidentifiabilities through extensive model simulation over parameter space. However unidentifiabilities are provided only as probable functionally related groups of parameters; the functional relations themselves are not determined.

There is an underlying duality between the sensitivity of model predictions to parameter perturbation, and the uncertainty associated with parameter estimation in the presence of measurement uncertainty: it is hard to accurately estimate the model parameters from noisy data if a large change in some (combinations of) parameters induces only a small change in model output. Consequently, identifiability is closely related to the concept of model \emph{sloppiness} \cite{Brown2003, Gutenkunst2007, Waterfall2006,Transtrum2010}. Here, the sensitivity of model predictions to infinitesimal parameter perturbation is highly anisotropic, in that the effect of perturbation in sensitive (`stiff') directions exceeds that of insensitive (`sloppy') directions by many orders of magnitude. It has been shown that sloppiness is a common feature in Systems Biology models \cite{Gutenkunst2007}. A sloppy direction suggests the existence of a practical unidentifiability, although the precise correspondence has been debated \cite{Apgar2010, Chis2014}. Conversely, if model sensitivity to parameter perturbation is dominated by a few `stiff' directions in parameter space, it is possible that an underlying, macroscopic model simplification exists \cite{Transtrum2014}. 

 Existing literature \cite{Joshi2006,Vallisneri2008,Raue2009,Hines2014}, has suggested that the local sensitivity characteristics of nonlinear models often poorly approximate their sensitivity to larger perturbations. This suggests that model sloppiness may not be informative in determining the uncertainty associated with parameter estimates when measurement uncertainty is non-infinitesimal. Dually, it implies, when simplifying models, that the most appropriate reduced model is highly dependent on the range of parameters over which validity is intended. This case has been made in e.g.  \cite{Anderson2009,Peixoto2014}. We introduce a  new notion, \emph{multiscale sloppiness}, that quantifies sensitivity anisotropy as a function of the length scale of perturbation considered, relative to a fixed dataset/model prediction. We show how it furthermore relates to the geometry of the set of parameters satisfying a particular Likelihood-ratio hypothesis test. Multiscale sloppiness asymptotically corresponds to the standard formulation of model sloppiness in the limit of decreasing length-scale. We find that the sensitivity characteristics of models can alter drastically as perturbations of increasing magnitude are considered. In addition, both sloppiness and multiscale sloppiness can be highly dependent on the particular parameter vector considered, as we demonstrate subsequently by example. This suggests that caution must be exercised in labelling an entire model structure sloppy, based on analysis at a single parameter vector. Multiscale sloppiness allows for analysis of how the uncertainty region associated with a parameter estimate changes as the signal-to-noise ratio of the data decreases. 

Our formulation of multiscale sloppiness leads to the presentation of a novel, numerical algorithm for unidentifiability detection. We take a particle in parameter space, and allow it to traverse parameter space via the solution of a set of Hamiltonian Equations. These equations are set such that the particle traces over structural unidentifiabilities if they exist, and otherwise over practical unidentifiabilities. Our algorithm has several attractive features. It is highly scalable, as shown in the examples. It does not require optimization of model output over parameter space, which can be computationally expensive and non-convex. Finally, it can detect not only the involved parameters, but also the \emph{functional form} of the unidentifiability. The idea of analysing model structure through evolving particles in parameter space has been previously considered in \cite{Transtrum2014}, where particles evolve along geodesics of the Fisher Information Metric. These trajectories  have different properties to ours: movement along such a curve changes the nominal parameter vector, while always considering infinitesimal magnitude parameter perturbations. Movement along our curves, by contrast, changes the length scale of parameter perturbation considered, relative to a fixed dataset or nominal parameter vector $\theta^*$. More insight on this distinction is provided in Section \ref{ex:exp}. The method of \cite{Transtrum2014} is, moreover, inapplicable when the model under analysis is structurally unidentifiable, and ill-conditioned when it has sloppy parameter vectors: calculation of the geodesic acceleration requires inversion of the Fisher Information Matrix.

\section{Quantifying Variability in Model Output over Parameter Space}
We consider mathematical models incorporating a parameter vector $\theta$, drawn from some set of allowable parameter vectors, $\Theta$, known as the parameter space. Model output, as a function of the parameter vector, is denoted $y(\theta)$, and may be finite or infinite-dimensional. We often make use of a nominal (i.e. reference) output, which either corresponds to experimentally derived data $D$, or model output at a nominal parameter vector $\theta^*$, i.e. $y(\theta^*)$. In order to quantify the degree of output disruption induced by parametric variation, we require a (scalar) cost function on parameter space. This is denoted $C_D(\theta)$ or $C_{\theta^*}(\theta)$, depending on the choice of reference. The lower the value of the cost function, the better that model output at $\theta$ represents the nominal output. The quantity $\nabla^2_{\theta}C_{\theta^*}(\theta^*)$, which is used extensively throughout the paper, will be referred to as the \emph{cost Hessian}.{ We note that the methods of our paper apply only to models for which both a cost function, and its gradient, can be calculated as a function of the parameters. However, our methods apply to any model for which these are calculable. We also assume that cost functions are twice differentiable.}

We now describe a common, statistically motivated choice of cost function used in the parameter estimation problem. As previously stated, the methods of the paper are not restricted to this choice. If we consider measurement noise as a known random variable corrupting model output, then observations of model output are probabilistic. Given noise-corrupted data $D$, we can then take 
\begin{align}
C_D(\theta) = - \log L(\theta;D) + R(D), \label{eq:statDcost}
\end{align}
where $L(\theta;D)$ represents the likelihood of $\theta$, given $D$, and $R(D)$ is some offset term. Note that $L(\theta;D)$ is numerically equal to the probability density of the data $D$, given noise-corrupted model output at parameter vector $\theta$. So parameter vectors that are more likely to produce the data $D$, when corrupted by measurement noise, have a lower cost. In fact, the parameter vector $\theta$ minimising $C_D(\theta)$ is known as the Maximum Likelihood Estimate (MLE).

If, instead of data $D$, we wish to explore model sensitivity around a nominal parameter vector $\theta^*$, we can instead assume that hypothetical data $D(\theta^*)$ is generated as
\begin{align}
D(\theta^*) = y(\theta^*) + \xi, \label{eq:dataGen}
\end{align}
where $\xi$ is a random variable denoting measurement noise. In this case we can take a cost function $C_{\theta^*}(\theta)$ as
\begin{align}
C_{\theta^*}(\theta) =\mathbb{E}_{\theta^*}[- \log L(\theta;D(\theta^*))] - \mathbb{E}_{\theta^*}[\log L(\theta^*;D(\theta^*))].  \label{eq:statTcost}
\end{align}
 This represents the expected value of \eqref{eq:statDcost}, assuming that data $D$ was generated according to \eqref{eq:dataGen} and that $R(D) = - \mathbb{E}_{\theta^*}[\log L(\theta^*;D(\theta^*))]$. This choice of $C_{\theta^*}(\theta)$ is known as the  Kullback-Leibler (KL) Divergence, or relative entropy, between the random variables $D(\theta^*)$ and $D(\theta)$, each generated according to \eqref{eq:dataGen}.
Moreover the cost Hessian $\nabla^2_{\theta}C_{\theta^*}(\theta^*)$ then corresponds to the Fisher Information Matrix at $\theta^*$. 
    
 Let us take a  concrete example to provide context. Suppose that we assume that measurement noise $\xi$ is Gaussian, being distributed as
  \begin{align*}
  \xi \sim \mathcal{N}(0,\Sigma),
  \end{align*}
  for some covariance matrix $\Sigma$. Then the choices \eqref{eq:statDcost} and \eqref{eq:statTcost} of cost function are respectively given by
  \begin{subequations} \label{eq:gaussianCosts}
\begin{align}
 C_D(\theta) &= \langle \big(D - y(\theta)\big), \Sigma^{-1}\big(D - y(\theta) \big)\rangle \label{eq:gaussianCostD} \\
C_{\theta^*}(\theta) &=  \langle y(\theta^*) - y(\theta), \Sigma^{-1}(y(\theta^*) - y(\theta) )\rangle, \label{eq:gaussianCostStar}
\end{align}
\end{subequations}
where $\langle x, y \rangle$ denotes the inner product of vectors $x$ and $y$. So this is the weighted sum of the squared residuals between the nominal output $D$ or $y(\theta^*)$, and the model output $y(\theta)$. 

Minimisers of the cost function represent model parameter vectors that `best approximate' the nominal output. However, there are likely to be many other parameter vectors that are in reasonable agreement with the nominal output. We define
\begin{subequations} \label{eq:epsUncertain}
\begin{align}
&U_D(\epsilon) = \{\theta: C_D(\theta) \leq \epsilon\} \label{eq:epsUncertainD} \\
&U_{\theta^*}(\epsilon) = \{\theta: C_{\theta^*}(\theta) \leq \epsilon\}, \label{eq:epsilonUncertain}
\end{align}
\end{subequations}
as the \emph{$\epsilon$-uncertainty regions}, relative to $D$ and $\theta^*$ respectively. A definition of \eqref{eq:epsUncertain}, specialised to the case of Gaussian measurement noise as considered in equations \eqref{eq:gaussianCosts}, was provided in \cite{Vajda1989}. 
{
When we take the costs \eqref{eq:statDcost} and \eqref{eq:statTcost}, then the $\epsilon$-uncertainty regions gain statistical properties, in both the Bayesian and Frequentist frameworks. We see this as sublevel sets of these cost functions precisely define the set of parameters whose likelihood given the (expected) data exceeds some critical value that is dependent upon the value of $\epsilon$. In Bayesian statistics, an $n$-percent credible region in parameter space is defined as a set of parameters within which random samples of the posterior distribution would fall $n$ times out of $100$. An $\epsilon$-uncertainty region of the form \eqref{eq:epsUncertainD} is then a  \emph{highest posterior density} credible region, under the assumption of a uniform prior. This type of credible region is a minimiser of the volume among the set of credible regions with the same percentage of credibility. In the presence of a more detailed prior distribution, the cost function \eqref{eq:statDcost} can be multiplied through by the prior for the highest posterior density property to hold. Meanwhile, the choice \eqref{eq:statTcost} of cost function represents those parameters whose expected likelihood, given data distributed according to \eqref{eq:dataGen}, would place them within the highest posterior density credible region.

The Frequentist interpretation of $\epsilon$-uncertainty regions is related to the task of hypothesis testing. Suppose, given data $D$, we took a null hypothesis of $D$ being a noise-corrupted observation of $y(\theta^*)$ (and thus distributed according to \eqref{eq:dataGen}). Suppose further that we wanted to test the alternative hypothesis that $D$ was generated from a different parameter vector within the parameter space. A Likelihood-ratio test (LRT) would reject the null hypothesis (i.e. $\theta^*$) if the following condition was satisfied:
\begin{align}
\Lambda_{\theta^*}(D) := \frac{L(\theta^*|D)}{ \sup_{\theta \in \Theta} L(\theta|D)} \leq k^*, \label{eq:lrt}
\end{align}
for some critical value $k^* \in [0,1]$, and where $\Lambda(D)$ is known as the LRT statistic. The significance level of this LRT is given as the $\alpha$ solving
\begin{align}
Pr\left(\Lambda_{\theta^*}(D) \leq k^* | \theta^* \right) = \alpha. \label{eq:epsToSig}
\end{align}
Using the Likelihood-based cost function introduced in \eqref{eq:gaussianCostD}, we can rewrite the LRT statistic as
\begin{align*}
\log \big(\Lambda_{\theta^*}(D) \big) = \inf_{\theta \in \Theta} C_D(\theta) -  C_D(\theta^*).
\end{align*}
Frequentist confidence regions are often generated by inverting the Likelihood-ratio test \cite{Casella2002}. Specifically, one chooses the set of parameters that would have passed the LRT given in \eqref{eq:lrt}, given the data, as the confidence region. This is given by
\begin{align*}
&\left\{ \hat{\theta} \in \Theta: C_D(\hat{\theta}) \leq \inf_{\theta \in\Theta}C_D(\theta) - \log(k^*) \right\} \\
&= U_D(\epsilon), \text{ for } \epsilon =  \inf_{\theta \in\Theta}C_D(\theta) - \log(k^*).
\end{align*}
We see that $U_D(\epsilon)$ is precisely the confidence region gained by inverting a Likelihood ratio test. Moreover the data-independent cost function  \eqref{eq:statTcost} can be written as $C_{\theta^*}(\hat{\theta}) = \mathbb{E}_{\theta^*}[C_D(\hat{\theta})]$, and we have:
\begin{align*}
&\hat{\theta} \in U_{\theta^*}(\epsilon)  \ \ \ \Leftrightarrow \ \ \
 \mathbb{E}_{\theta^*}[\Lambda_{\hat{\theta}}(D)] \leq \exp(\epsilon),
\end{align*}
 where $\Lambda(D)$ is the Likelihood-ratio test statistic defined in \eqref{eq:lrt}. In other words, $U_{\theta^*}(\epsilon)$ precisely defines the set of parameter vectors that would be expected to pass a LRT, given noise-corrupted data generated from $y(\theta^*)$, and a significance level $\alpha$ related to $\epsilon$ by \eqref{eq:epsToSig}. An approximate probability distribution for $\Lambda_{\theta^*}(D)$ can be gained using Wilks' Theorem \cite{Casella2002}. }


 \section{Relating Sloppiness and Unidentifiability}
 In this section we define model sloppiness and unidentifiability, and describe their relationship. Suppose two parameter vectors $\theta^*$ and $\theta$ have identical observed output. In this case, $C_{\theta^*}(\theta) :=0$, and we deem the model \emph{structurally unidentifiable} at $\theta^*$, following \cite{Bellman1970}. The particular parameters that differ between $\theta^*$ and $\theta$ are the structurally unidentifiable parameters. 
An example is the damped harmonic oscillator, modelled by the equation
\begin{align}
m\ddot{y}(t) + c\dot{y}(t) + ky(t) = 0. \label{eq:standardOsc}
\end{align}
Here $y(t)$ denotes vertical displacement as a function of time, and is taken as the observed variable. Meanwhile, $\theta =[m,c,k]$ denotes the mass, damping coefficient, and spring constant respectively, and dotted variables are time derivatives. Although there are three parameters, dividing \eqref{eq:standardOsc} through by $m$ shows that observations are invariant to parameter changes preserving the ratios $\frac{c}{m}$ and $\frac{k}{m}$. Thus, all three parameters are structurally unidentifiable, and we say that there is a structural unidentifiability over areas of parameter space preserving these ratios. As well as providing physical insight, this tells us that the parameter estimation problem is ill-posed for this model: we cannot estimate $\theta$ by observing $y(t)$. A standard reparameterisation of \eqref{eq:standardOsc} is
\begin{align}
\ddot{y}(t) + 2\zeta \omega_0 \dot{y}(t) + \omega_0^2y(t) = 0. \label{eq:reparamOsc}
\end{align}
Here $\omega_0$ and $\zeta$, known as the natural frequency and damping ratio respectively, are given by $\omega_0 = \sqrt{\frac{k}{m}}$, $\zeta = \frac{c}{2\sqrt{mk}}$. The reparameterised model is now identifiable, and the new parameters can be estimated from the observation data.

Even if a model is structurally identifiable at $\theta^*$, which implies that $U_{\theta^*}(0) = \{\theta^*\}$, it may be that the set $U_{\theta^*}(\epsilon)$ remains large for small, but nonzero $\epsilon$. This means that large tracts of parameter space induce very similar model output that cannot easily be distinguished from $\theta^*$ given even slightly noisy data. This is known as practical unidentifiability \cite{Vajda1989}: the effect on parameter estimation is that a small degree of measurement uncertainty results in a large degree of parametric uncertainty. Sloppy models often exhibit the same pathology (\cite{Gutenkunst2007, TranstrumMS2011}), and indeed there is a formal relationship between sloppiness and the properties of $U_{\theta^*}(\epsilon)$, which we present next.

Sloppiness quantifies the anisotropy in output disruption when infinitesimal parameter perturbations are applied to a nominal parameter vector $\theta^*$. Specifically, the degree of sloppiness is the ratio between the degree of disruption in the most and least sensitive directions. Algebraically, this is the condition number of the cost Hessian $\nabla^2C_{\theta^*}(\theta^*)$, i.e. the ratio of its maximal and minimal eigenvalues. Geometrically, this is the aspect ratio of the hyper-ellipse of perturbations $\delta \theta$ satisfying $\langle \delta \theta, [\nabla^2C_{\theta^*}(\theta^*)] \delta \theta \rangle = 1$. The major and minor axes of this hyper-ellipse respectively represent the sloppiest and stiffest directions in parameter space. When the cost Hessian is singular, we will say that the degree of sloppiness is `infinite'. Note, however, that a singular cost Hessian does not imply local structural unidentifiability of a parameter vector, although the converse is true (see the model described in equation \eqref{ex:lpvODE}). Furthermore if the cost Hessian is numerically generated, then it is not possible to discriminate between true singularity and the existence of very small but strictly positive eigenvalues. 
 
  Meanwhile the shape of $U_{\theta^*}(\epsilon)$, for locally structurally identifiable $\theta^*$, is approximated by the same hyper-ellipse (up to scaling) in the limit of decreasing $\epsilon$ (see Appendix \ref{app:section2} and Figure \ref{fig:uncertaintyToEllipse}). This is due to the effect of the second term of the Taylor expansion of $C_{\theta^*}(\theta)$ eventually dominating in \eqref{eq:epsilonUncertain}, as $\epsilon$ decreases. There is a direct statistical interpretation when the cost functions satisfy a condition known as asymptotic normality  \cite{Casella2002} (cost functions satisfying this include \eqref{eq:statDcost},
\eqref{eq:statTcost}, \eqref{eq:gaussianCostD} and \eqref{eq:gaussianCostStar}). This is the famous Cramer-Rao bound, which states that in the limit of decreasing measurement uncertainty, the matrix $[\nabla^2_\theta C_{\theta^*}(\theta)]^{-1}$ and the covariance matrix of the MLE coincide. So sloppiness corresponds approximately to anisotropy in the MLE covariance. However this approximation can fare poorly when measurement uncertainty is non-negligible \cite{Joshi2006,Raue2009}. 
\begin{figure}[h!] 
\centering
\includegraphics[width = 8cm]{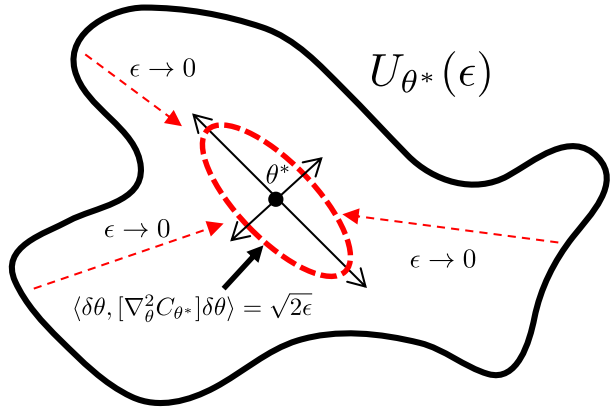}
\caption{\label{fig:uncertaintyToEllipse} As the length-scale $\epsilon$ tends to zero (dashed arrows), the shape of the $\epsilon$-uncertainty region $U_{\theta^*}(\epsilon)$ is increasingly dictated by level sets of $\langle \delta \theta, \nabla^2_{\theta}C_{\theta^*}(\theta^*)\delta \theta \rangle$ (dashed ellipse). This is demonstrated mathematically in Lemma \ref{lem:Ulimit} of Appendix \ref{app:section2}. The sloppiest and stiffest directions are the major and minor axes of the ellipse, respectively.}
 \end{figure}

Sloppiness has been claimed to be a fundamental feature of model structure \cite{Waterfall2006,Gutenkunst2007}, rather than a property of a particular parameter vector. Caution must be exercised in intuiting the former from the latter. Recall the oscillator model \eqref{eq:reparamOsc}, and assume we observe the trajectory, while taking $\theta = [\zeta,\omega_0,y(0),\dot{y}(0)]$, so that initial position and velocity are now both parameters. We take a continuous-time analogue of the squared-residual cost function \eqref{eq:gaussianCostStar}:
\begin{align}
C_{\theta^*}(\theta) = \int^{\infty}_0 \|y(t,\theta) - y(t,\theta^*)\|_2^2 \ dt, \label{eq:l2cost}
\end{align}
which can be reformulated as an algebraic function in the parameters using the methods of \cite{Raman2016}. The degree of sloppiness at the two nominal parameter vectors $\theta^{1,*} = [0.5,1,1,0]$; $\theta^{2,*} = [5,1,1,0]$, are $94.0$ and $3.41 \times 10^7$ respectively, a separation of six orders of magnitude. 

\section{Multiscale Sloppiness}
We now introduce a new notion quantifying the anisotropy of model sensitivity for non-infinitesimal parameter perturbations. This furthermore leads towards an eventual algorithm for uncovering model unidentifiabilities. Define
\begin{subequations}
\begin{align}
&D^{max}_{\bullet}(\delta) = \arg\max_{\theta} C_{\bullet}(\theta) : \|\theta - \theta^*\|^2_2 \leq \delta \label{eq:maxDisrupt} \\
&D^{min}_{\bullet}(\delta) = \arg\min_{\theta} C_{\bullet}(\theta) : \|\theta - \theta^*\|^2_2 = \delta, \label{eq:minDisrupt}
\end{align}
\end{subequations}
where $\|x\|_2$ denotes the Euclidean 2-norm of the vector $x$. Note that \eqref{eq:maxDisrupt} and \eqref{eq:minDisrupt} are set-valued functions: the optimisers may not be unique.
We can think of $D^{max}_{\bullet}(\delta)$ (resp. $ D^{min}_{\bullet}(\delta)$) as the maximally (minimally) disruptive parameter vectors relative to $\theta^*$ at length scale $\delta$. The dual, estimation-based interpretation follows naturally. Let us take $\bar{U}_{\bullet}(\epsilon)$ as the connected component of $U_{\bullet}(\epsilon)$ around $\theta^*$.
Given $\epsilon$, 
we can take $\delta_1$ and $\delta_2$ such that $C_{\bullet}(D^{min}_{\theta^*}(\delta_1)) = \epsilon$, and  $C_{\bullet}(D^{max}_{\theta^*}(\delta_2)) = \epsilon$. Then

\begin{subequations}
\begin{align}
&D^{min}_{\bullet}(\delta_1) = \arg \max_{\theta}  \|\theta - \theta^*\|^2_2: \theta \in \bar{U}_{\bullet}(\epsilon) \label{eq:minDisruptDual}\\
&D^{max}_{\bullet}(\delta^2) = \arg \min_{\theta}  \|\theta - \theta^*\|^2_2: \theta \notin \bar{U}_{\bullet}(\epsilon) \label{eq:maxDisruptDual}.
\end{align}
\end{subequations}
Thus they respectively signify the furthest parameter vectors inside, and the closest parameter vectors outside, the connected component of an $\epsilon$-uncertainty region.  Correspondingly, they represent the furthest (resp. closest) parameter vectors that would be expected to pass (fail) the LRT given in \eqref{eq:lrt} (see Figure \ref{fig:AofDeltavsEps}).

The previous discussion motivates our new definition of \emph{multiscale sloppiness}, at length scale $\delta$, relative to $\theta^*$:
\begin{align}
& \mathcal{S}_{\bullet}(\delta) := \frac{\mathcal{S}^{max}_{\bullet}(\delta)}{\mathcal{S}^{min}_{\bullet}(\delta)} := \frac{\max_{\theta} C_{\bullet}(\theta) : \|\theta - \theta^*\|^2_2 \leq \delta}{\min_\theta  C_{\bullet}(\theta) : \|\theta - \theta^*\|^2_2 = \delta }. \label{eq:newSlop}
\end{align}
\begin{figure}[h!]
\centering
\includegraphics[width = 5cm]{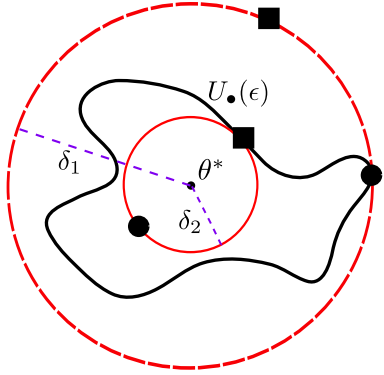}
\caption{\label{fig:AofDeltavsEps} We pick balls of radius $\delta_1$ and $\delta_2$ (dashed) such that  $\mathcal{S}^{min}_{\bullet}(\delta_1) = \mathcal{S}^{max}_{\bullet}(\delta_2)) = \epsilon$. Square and circular dots respectively indicate locations of $D^{max}_{\bullet}(\delta_i)$ and $D^{min}_{\bullet}(\delta_i)$. The balls intersect the uncertainty region $U_{\bullet}(\epsilon)$ at $D^{max}_{\bullet}(\delta_1)$ and $D^{min}_{\bullet}(\delta_2)$.}
 \end{figure}
Note that $\lim_{\delta \to 0} \mathcal{S}_{\theta^*}(\delta)$ is the condition number of the cost Hessian: i.e. the traditional degree of sloppiness. There is no analogue for $ \mathcal{S}_{D}(\delta)$, the data-dependent variant, as the traditional degree of sloppiness cannot account for experimental data. 
We see that $\mathcal{S}_{\theta^*}(\delta)$, for nonzero $\delta$, quantifies anisotropy in the sensitivity characteristics of $\theta^*$ for parameter perturbations of nonzero length-scale. Meanwhile, the numerator and denominator of \eqref{eq:newSlop} are the costs associated with the maximally and minimally disruptive parameter vectors at length-scale $\delta$. Their respective geometrical relationship with the $\epsilon$-uncertainty region has been previously described, and is depicted in Figure \ref{fig:AofDeltavsEps}.

Multiscale sloppiness can be highly dependent on $\delta$. The left-hand side of Figure \ref{fig:overlaidPlots} shows the multiscale sloppiness of the damped oscillator model \eqref{eq:reparamOsc}, with cost function \eqref{eq:l2cost}, at one of the previously considered nominal parameter vectors. In this case, multiscale sloppiness increases with length scale $\delta$. This is not a general rule, however. Consider the model 
\begin{align}
&\begin{bmatrix}
\dot{x}_1(t) \\ \dot{x}_2(t)
\end{bmatrix}
=
  \begin{bmatrix} 
-1 &\phi_1 \\
\phi_1 + \phi_2 &\phi_2 - 4 
\end{bmatrix}
\begin{bmatrix}
{x}_1(t) \\ {x}_2(t)
\end{bmatrix} \ \ \ \ \ \ \ t\geq 0 \nonumber \\
&y(t) = x_1(t) + x_2(t); \label{ex:lpvODE}
 \ \ \ \ \ \ \  \theta = [\phi_1,\phi_2,x_1(0),x_2(0)]
\end{align}
A graph of multiscale sloppiness against length scale for this model is given in Figure \ref{fig:overlaidPlots}, for three nominal parameter vectors and the cost function \eqref{eq:l2cost}. The first parameter vector is structurally identifiable (through the method of \cite{Bellman1970}), yet `infinitely' sloppy, as $\nabla^2_{\theta}C_{\theta^*}(\theta)$ is rank deficient. The second parameter vector is a small perturbation of the first, which has near identical multiscale sloppiness characteristics for nonzero $\delta$, but has nonsingular $\nabla^2_{\theta}C_{\theta^*}(\theta)$. The third parameter vector is much less sloppy at small length scales, but the situation reverses as $\delta$ grows. So if data were generated from $\theta^{*,3}$ with enough measurement noise, the confidence region of the consequent parameter estimate would be expected to be more anisotropic than for $\theta^{*,1}$ and $\theta^{*,2}$, despite its lower level of sloppiness.

 \begin{figure}[h!]
\includegraphics[width = 8.5cm]{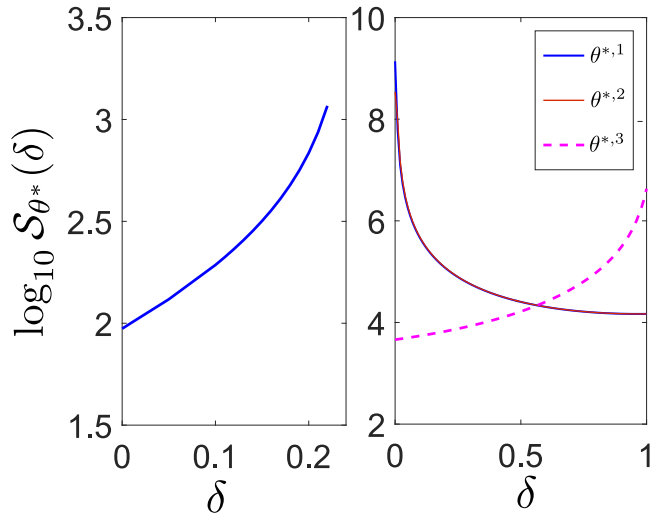}
\centering
\caption{\label{fig:overlaidPlots} Graphs of $\mathcal{S}_{\theta^*}(\delta)$ against $\delta$ for (left) the damped oscillator \eqref{eq:reparamOsc} at $\theta^* = [0.5,1,1,0]$, and (right) \eqref{ex:lpvODE}. The right-hand figure has three choices of nominal parameter vector: $\theta^{*,1} = [0, 0, 1, 2]$, $\theta^{*,2} = [0.001, 0, 1, 2]$, and $\theta^{*,3} = [-0.2, -0.2, 1, 2]$. Not shown on the right-hand graph is that for $\theta^{*,1}$, we have $\mathcal{S}_{\theta^*}(0) = \infty$. }  
 \end{figure}
 
 {
Multiscale sloppiness is formulated with respect to the $L_2$ norm on parameter space (see equations \eqref{eq:minDisruptDual} and \eqref{eq:maxDisruptDual}). Frequently, it is not easy to non-dimensionalise a model so that the parameters are dimensionless quantities. Often, the rescaling $\hat{\theta}_i = \log\theta_i$ is employed so that parametric variation is quantified as a relative, rather than absolute (see e.g. \cite{Gutenkunst2007}), change, in this case. This method is inapplicable when a parameter component with a nominal value of $0$ is considered, and may not be desirable in other circumstances. In such cases, the $L_2$ norm on parameter space depends on the units by which the parameters are measured, and can be weighted by rescaling. Thus the multiscale sloppiness, at a fixed $\delta$, is sensitive to rescaling. However, analysis of the change in multiscale sloppiness properties as the length scale $\delta$ is varied yields model insights that are independent of the parameter units in a dimensionalised model. We now discuss the relationship between multiscale sloppiness, practical unidentifiability, and the scaling of units in parameter space.

 If a modeller is interested in the behaviour of a model to perturbations in a parameter with units of distance, and these perturbations are on the length-scale of millimetres, then the modeller should use units of millimetres for the perturbation. If it turns out that only perturbations on the length-scale of metres have a discernible effect on model output, then the parameter is practically unidentifiable: it cannot be estimated to millimetre precision. Practical unidentifiability is a quantitave phenomenon, dependent on the requirements of the modeller, as opposed to structural unidentifiability, which is a structural phenomenon. In this case, minimally disruptive parameters will align with the co-ordinate axis of the parameter in question.

In general, practical unidentifiabilities cannot be explained as a consequence of poorly chosen units in parameter space, and point to more fundamental structural properties of the model. We provide several examples of such unidentifiabilities subsequently, in Subsection \ref{subsec:nfkb}. We now provide a brief, motivating overview of one such example. Suppose a model contains a fast-timescale subsystem whose dynamics effectively equilibrate on the timescale of the observed dynamics. Increasing the (positive) time-constant $\tau$ of the subsystem will have almost no effect on dynamics. Decreasing the time-constant will also have initially negligible effect, until a critical value $\tau^*$ is reached at which instant equilibration starts to become a poor approximation. Past this point, perturbations to the time-constant will have a significant effect on observed model dynamics. Regardless of the units in which the time-constant is measured, we have one interval of $\tau$ values over which perturbation significantly affects dynamics, and another in which its effect is negligible. The minimally disruptive parameters for length-scale $\delta < \tau - \tau^*$, will be aligned with the co-ordinate axis of $\tau$, and have negligible effect on the dynamics. The multiscale sloppiness will therefore be high. At $\delta \approx \tau - \tau^*$, the multiscale sloppiness will decrease, and the minimally disruptive parameters will diverge from the co-ordinate axis of $\tau$, as further movement in the direction of decreasing $\tau$ becomes increasingly disruptive. Finally, a step change may occur, at which point the minimally disruptive parameters lie in a different direction of parameter space, reflecting a different model approximation.






 }
 
 \section{Minimally Disruptive Curves in Parameter Space}
In the previous section we highlighted the relationship between minimally disruptive parameter vectors and practically unidentifiable subspaces (as quantified through the $\epsilon$-uncertainty region). However, finding the minimally/maximally disruptive parameter vectors for a non-zero length-scale equates to a nonlinear, non-convex optimization which may be very difficult to solve. { Therefore this section instead formulates the concept of minimally disruptive curves in parameter space, which can be traced by evolving a particle satisfying a set of Hamiltonian equations. These curves uncover regions of parameter space over which the cost function stays at, or close, to zero. Indeed these curves are guaranteed to trace over continuous branches of minimally disruptive parameters with strictly monotonically increasing length scale, where the latter exist. Hence, they are also guaranteed to trace over all structural unidentifiabilities with strictly monotonically increasing length-scale. A slight modification of the formulation is further guaranteed to trace over structural unidentifiabilities that do not satisfy the aforementioned strict monotonicity.}

We first define the notion of a `continuous branch' of minimally disruptive parameter vectors. Let us consider any continuous function $ \Xi^\Delta_{min}$ satisfying:
 \begin{align}
& \Xi^\Delta_{min}(\delta) \in D^{min}_{\theta^*}(\delta) \ \forall \delta \in [0,\Delta], \label{eq:choiceD} 
 \end{align}
 where $\Delta$ is the largest considered length-scale.
So such a $\Xi^\Delta_{min}$ exists when there is a continuous curve of minimally disruptive parameter vectors with strictly monotonically increasing distance $\delta$ from $\theta^*$, up to distance $\Delta$. We can pick an analogous function  $\Xi^\Delta_{max}$ with the same properties, but formulated with respect to $D^{max}_{\theta^*}$ and the maximally disruptive parameters. Note that
\begin{align*}
&\frac{\mathrm{d}}{\mathrm{d} \delta} \left[ \Xi^\Delta_{min}(\delta) \right]_{\delta = 0} \text{, and }
\frac{\mathrm{d}}{\mathrm{d}\delta} \left[ \Xi^\Delta_{max}(\delta)  \right]_{\delta = 0}
\end{align*}
lie in the eigenspace of the sloppiest and stiffest eigenvalues of the cost Hessian, respectively. 
 
If a function $\Xi^\Delta_{min}(\delta)$ satisfying \eqref{eq:choiceD} exists, then it solves, for some $F$, the following optimization problem (see Appendix \ref{app:section4}):
  \begin{align}
& \gamma^* = \min_{\gamma \in \Gamma^F_{\theta^*}} : \int_{\gamma} C_{\bullet}: \label{eq:ldMin} 
\\ \Gamma^F_{\theta^*} =  
&\left\{  \gamma:[0,F] \to \Theta: \begin{array}{ll} \gamma(0) = \theta^*;  &\frac{\mathrm{d}}{\mathrm{d}s} \|\gamma(s) - \theta^*\|_2 > 0 \\ \|\gamma'(s)\|_2=1
   \end{array}\right\}. \nonumber
\end{align}
 We therefore refer to a trajectory $\gamma(s)$ satisfying \eqref{eq:ldMin} as a \emph{minimally disruptive curve} of length $F$, relative to $\theta^*$. Note that structurally unidentifiable parameter vectors are global minima of the cost function $C_{\theta^*}$. So if there is a curve of structurally unidentifiable parameters in $\Gamma^F_{\theta^*}$, they will be minimally disruptive.
 
  \begin{figure}[h!]
\includegraphics[width = 8.5cm]{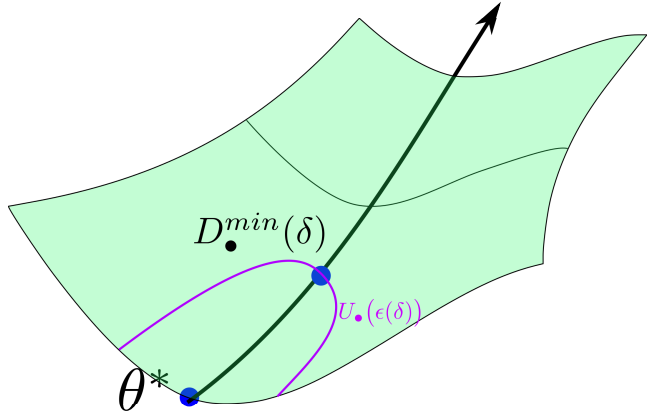}
\centering
\caption{\label{fig:valley}  When a branch of minimally disruptive parameters $D^{min}(\delta)$ vary continuously in $\delta$, their
union is a minimally disruptive curve. Each minimally disruptive parameter on the curve is the (non-unique) furthest point from $\theta^*$ within some (connected) $\epsilon$-uncertainty
region. }  
\end{figure}

A naive approach to solution of \eqref{eq:ldMin} would be a gradient descent style algorithm, where $\gamma'(s)$ points along the direction of steepest descent of $C_{\bullet}\big(\gamma(s)\big)$, subject to appropriate constraints. However in a structurally unidentifiable model, $C_{\theta^*}(\gamma(s))$ is always a local minimum of $C_{\theta^*}$ with null Jacobian $\nabla_{\theta}C_{\theta^*}$. Since exact nullity is never realised in a numerical context, a steepest descent direction will always exist and be highly sensitive to perturbation around the local minima, making numerical evolution ill-conditioned (see Figure \ref{fig:purpleFig}). Instead we can consider \eqref{eq:ldMin} as a constrained variational problem (details in Appendix \ref{app:section3}), and obtain necessary conditions on $\gamma(s)$ to satisfy \eqref{eq:ldMin} by application of Pontryagin's Minimum Principle. These are listed below (and derived in Appendix \ref{app:section3}). Note that the derivative (in $s$) of a function $f$ is denoted $f'(s)$, and the conditions must hold for all $s \in [0,F]$.
 \begin{subequations}
\begin{align}
&C_{\bullet}(\gamma(s)) + \langle \lambda(s),\gamma'(s) \rangle - C_{\bullet}(\gamma(F)) = 0 \\
&  \mu_1(s) (\gamma(s) - \theta^*) + \Big(C_{\bullet}(\gamma(F)) - C_{\bullet}(\gamma(s))\Big) \gamma'(s)= \lambda(s),
\end{align}
where
\begin{align*}
&\lambda(0) = \Big(C_{\bullet}(\gamma(F)) - C_{\bullet}(\gamma(0))\Big)\gamma'(0) \\
& \lambda'(s) = \mu_1(s)\gamma'(s) - \nabla_{\theta}C_{\bullet}(\theta) &\mu_1(s) \geq 0.
\end{align*}
\label{eq:actualHam}
\end{subequations}
 Here $\lambda(s)$ is known as the costate, and is directly analogous to the momentum of Hamiltonian Mechanics. For a given parameter vector $\gamma(s)$, we assume that the cost function and its gradient, i.e. $C_{\bullet}(\gamma(s))$ and $\nabla_{\theta} C_{\bullet}(\gamma(s))$, can be calculated. If we additionally know $\lambda(s)$ and $ C_{\bullet}(\gamma(F))$, then the remaining unknowns $\mu(s)$ and $\gamma'(s)$ are linear in 
\eqref{eq:actualHam}, and can be solved for. Thus $\gamma(s)$ and $\lambda(s)$ uniquely evolve as a system of coupled differential equations in $s$, given a choice of $\gamma'(0)$ and $C_{\bullet}(\gamma(F))$. For \eqref{eq:ldMin} to hold as $s$ approaches zero, $\gamma'(0)$ must be within the eigenspace of the smallest eigenvalue of the cost Hessian. Its magnitude is also fixed. Thus there are $2e$ choices of $\gamma'(0)$,  where $e$ is the dimension of the aforementioned eigenspace. { Note that for $e>1$, the $2e$ choices should form an orthornormal basis in this eigenspace, but such a basis is not unique. This situation occurs in the example of Subsection \ref{ex:jakStat}, which contains a two-dimensional structurally unidentifiable subspace. The discussion accompanying this example provides insight on the appropriate choice of basis.}

The only free variable is now $C_{\bullet}(\gamma(F))$, which sets the initial momentum of the particle tracing the curve, and specifies the cost-cutoff at which the trajectory terminates. Note that this initial momentum ensures that we avoid the previously described pitfalls of a steepest descent approach to solving \eqref{eq:ldMin} (see Fig. \ref{fig:purpleFig}).

 A maximally disruptive curve can be defined by turning the minimisation of \eqref{eq:ldMin} into a maximisation. Such a curve maintains an analogous relationship with  $\Xi^\Delta_{max}$, if it exists. However it is of less interest from the point of view of unidentifiability characterisation. Note that in this case, a gradient-ascent style approach is sufficient from a computational viewpoint.

\begin{figure}[h] 
\includegraphics[width = 8cm]{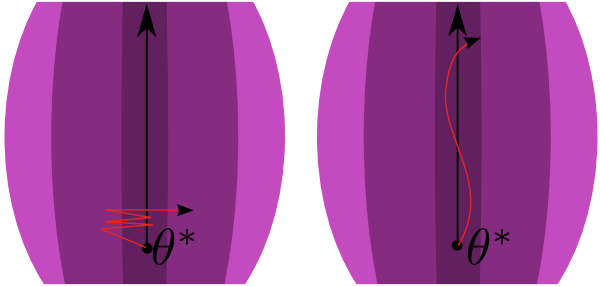}
\centering
\caption{\label{fig:purpleFig} Bird's eye view of level sets of the cost function. Darker areas have lower cost, while the black arrow traces the root of the cost function (i.e. a structural unidentifiability). Red lines denote the trajectory of a particle attempting to evolve numerically along the structural unidentifiability. On the left, a steepest descent style algorithm fails due to the stiffness of the direction of descent around a local minimum. On the right: the inherent momentum associated with trajectory evolution via solution of Hamiltonian equations ensures that numerical error does not completely disrupt evolution.}
 \end{figure}

{ Note that model analysis using minimally disruptive curves is not restricted to models for which a continuous branch of minimally disruptive parameters (as given in \eqref{eq:ldMin}) exists. Instead, existence of the function given in \eqref{eq:ldMin}  implies that the graph of minimally disruptive parameters is a minimally disruptive curve. An unidentifiability may not have strictly monotonically increasing distance from $\theta^*$, making it untraceable by a minimally disruptive curve. In the case of structural unidentifiabilities, we can easily modify the optimization \eqref{eq:ldMin} to accommodate this. We would replace the condition $\frac{\mathrm{d}}{\mathrm{d}s} \|\gamma(s) - \theta^*\|_2 > 0 $ with 
\begin{align}
\frac{\mathrm{d}}{\mathrm{d}s} \|\gamma(s) - \theta^*\|_2 - c \|\gamma(s) - \theta^*\|_2 > 0, \label{eq:monotoneRelax}
\end{align}
for some constant $c \in [0,2)$. This would trivially modify the Hamiltonian equations associated with the solution of \eqref{eq:ldMin}, and derived in Appendix \ref{app:section3}. In this case, the optimization would be guaranteed to trace over any structurally unidentifiable curve $\gamma(s)$, whose angle $\phi(s)$ between $\gamma'(s)$ and the vector $\theta^* - \gamma(s)$ was always greater than $\cos^{-1}\big(\frac{c}{2}\big)$. This guarantee is a result of global minimality of every structurally unidentifiable parameter vector, with respect to the cost function.

In the case of practical unidentifiabilities, the approach proposed in the previous paragraph is not guaranteed to be effective, although it pragmatically recovers highly practically unidentifiable subspaces not obeying monotonicity of distance from $\theta^*$. For less sharply defined practically unidentifiable subspaces, usage of the condition \eqref{eq:monotoneRelax} with $c>0$ can result in a curve that confines itself to a ball around $\theta^*$, outside of which all parameters have non-negligible cost. Thus, the curve will not be informative with respect to parameter perturbations with a larger length-scale than the ball's radius. This can be heuristically tackled by increasing the initial momentum imparted to the particle tracing the curve.

The formalism of minimally disruptive curves uncovers unidentifiabilities of a continuous nature. In this way, functional relations on parameter space over which model output changes little may be recovered. Models may also possess isolated parameter vectors that induce the same, or similar, output (see e.g. the example of Subsection \ref{ex:exp}). If so, the cost function will possess multiple isolated local minima. The $\epsilon$-uncertainty regions may then consist of multiple, unconnected regions each surrounding a local minimum. 
Finding all local minima corresponds to a global optimization problem on the cost function, for which even evaluation often requires a numerical model simulation. This is in general an NP-hard problem \cite{Moles2003}. While local minima of the cost function may well correspond to minimally disruptive parameters at the relevant length-scale, they are not recoverable through the generation of minimally disruptive curves.


}


\section{Examples} \label{sec:ex}

We provide three examples of our methods. The first illustrates concepts introduced in previous sections, while highlighting the importance of non-locality in sensitivity quantification. The second verifies our methods on a benchmark model used in identifiability analysis. The third provides a more detailed analysis of a different benchmark example from the identifiability literature, specifically a metabolic reaction network. Novel structural and practical unidentifiabilities are identified. It is shown how our method can be used to gain additional mechanistic insight. Specifically, we uncover timescale-separated subsystems (and the regimes in which the timescale separation is valid),  unnecessary mechanisms, and model approximations (together with their regimes of validity). 

\subsection{Illustrative Example} \label{ex:exp}
We first use a simple example to illustrate the concepts introduced in this paper, and compare them against the existing sloppiness framework. The model, previously considered in \cite{TranstrumMS2011, Transtrum2014}, is given as:
\begin{align}
&y(\theta) = \left[ e^{\frac{-\theta_1}{3}} + e^{\frac{-\theta_2}{3}}, e^{-\theta_1} + e^{-\theta_2}, e^{-3\theta_1} + e^{-3\theta_2} \right]. \label{eq:expModel}\\
&C_{\theta^*}(\theta) = \|y(\theta) - y(\theta^*)\|_2^2 \nonumber 
\end{align}
We take parameter space as $\Theta = [0,4] \times [0, 4]$. We consider a nominal parameter vector $\theta^* = [4,0.5]$. We deliberately place $\theta^*$ on the boundary of $\Theta$ so that only one direction of the sloppiest eigenvector of $\nabla^2_{\theta}C_{\theta^*}(\theta)$ need be considered. This halves the analysis required without sacrificing illustrative power. 

Note that model output \eqref{eq:expModel} corresponds to observing a mixed exponential decay of the form $e^{-\theta_1 t} + e^{-\theta_2 t}$, at the timepoints $t=\{\frac{1}{3},1,3\}$, and so model output is invariant with respect to permutation of the parameters. The model at $\theta^*$ is therefore structurally identifiable in a local neighbourhood but structurally unidentifiable from a global perspective (for nonzero parameter vectors). In such a case, a minimally disruptive curve is not guaranteed to uncover the structurally unidentifiable parameter vector $[0.5, 4]$: the guarantee only applies when structural unidentifiability is local. In this example, by chance, the minimally disruptive curve does in fact trace over $[0.5, 4]$. 

We demonstrate in Figure \ref{fig:expComp} the differences inherent to evolving a particle in parameter space along its sloppiest eigenvector (which retains no memory of the initial vector $\theta^*$), as compared to evolving a minimally disruptive curve (whose evolution is explicitly dependent on $\theta^*$). To these ends, we first construct a vector field that, when evaluated at some $\theta \in \Theta$, points in the direction of the sloppiest eigenvector of $\nabla^2_\theta C_{\theta}(\theta)$. The flow of this vector field, initialised from some $\theta^*$ (which we take to be $[4,0.5]$), is compared to a minimally disruptive curve emanating from the same $\theta^*$. We see that the former, although by definition always pointing in the sloppiest direction (to infinitesimal magnitude), diverges from the latter. Indeed, the flow of the vector field takes a higher cost route, as shown on the figure, tracing over parameter vectors with more highly diverging output (relative to $y(\theta^*)$) than the minimally disruptive curve (as required by definition). This discrepancy arises from the fact that the flow of the vector field, at some $\theta \in \Theta$ is determined independently of $\theta^*$, and by analysis of the matrix $\nabla^2_{\theta}C_{\theta}(\theta)$. The direction minimising local change in $\nabla^2_{\theta}C_{\theta}(\theta)$, is not necessarily the direction minimising non-local change in $C_{\theta^*}(\theta)$. Furthermore the minimal eigenvector of the cost Hessian at the nominal, i.e. $\nabla^2_{\theta^*}C_{\theta}(\theta)$, is a poor marker of the direction of minimal sensitivity to non-infinitesimal parameter perturbation.

\begin{figure}[h] 
\includegraphics[width=8.6cm]{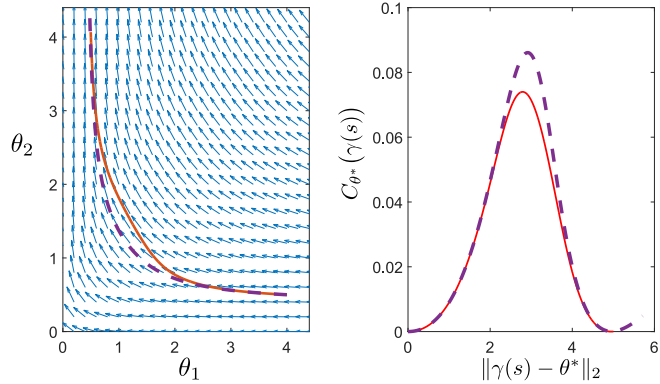}
\centering
\caption{\label{fig:expComp} Left: Vector field formed by sloppiest eigenvectors of the cost Hessian (one direction only), for the model \eqref{eq:expModel}. Overlaid are (solid line, red) a curve of minimally disruptive parameters, and (broken line, purple) the flow of the vector field. Right: The costs associated with these two curves. The x-axis charts Euclidean distance from $\theta^*$, while the y-axis charts the associated cost. }%
 \end{figure}

\subsection{IL13-induced JAK-STAT Pathway} \label{ex:jakStat}
 We next verify our methods on a differential equation model of the IL13-induced JAK-STAT pathway \cite{Raia2011}, which was used as a benchmark for identifiability analysis in \cite{Raue2014}. The model has a $23$-dimensional parameter space (further details in the Supplementary Information) together with published estimates of each parameter, constituting $\theta^*$. In \cite{Raue2014}, two structural unidentifiabilities were found: model output was invariant over sets of the form
\begin{subequations}
\begin{align}
& {\theta^*_{17}\theta^*_{22}} = {\theta_{17}\theta_{22}};  \label{eq:jakSubspace1} \\
& {\theta^*_{15}\theta^*_{21}}={\theta_{15}\theta_{21}} \text{ and }  {\theta^*_{15}\theta_{11}}={\theta^*_{11}\theta_{15}}.\label{eq:jakSubspace2}
\end{align}
\end{subequations}
Only the method of \cite{Raue2009} succeeded in detecting both the parameters involved in \eqref{eq:jakSubspace1}, \eqref{eq:jakSubspace2}, and their functional relations. This method requires evolution of a separate curve for each of the $23$ parameters.  To generate the $i^{th}$ curve, $\theta^*_i$ is varied incrementally, and the other parameters iteratively re-optimised so as to minimise the effect on the cost function. This has the advantage of not requiring computation of $\nabla_{\theta}C_{\bullet}$ (unlike our method), and also generates componentwise confidence intervals. However the re-optimization steps can be expensive.

 Our method recovered both \eqref{eq:jakSubspace1} and \eqref{eq:jakSubspace2}, requiring only four curve generations (evolvable in parallel). The Cost Hessian, at $\theta^*$, had two eigenvectors with numerically zero eigenvalues. Therefore the four curves had their initial velocity $\gamma'(0)$ aligned in both the positive and negative directions of each of these eigenvectors. The trajectories are shown in Figure \ref{fig:jakStat}. The top half of Figure \ref{fig:jakStat}a shows the log-space evolution of each component of $\gamma(s)$, relative to $\theta^*$. The components corresponding to the rate constants $\theta_{17}$ and $\theta_{22}$ are marked, and we see that the summation $\log \theta_{17}  + \log \theta_{22}$ is preserved to numerical precision. The lower half of the figure shows, meanwhile, that the cost $C_{\theta^*}(\gamma(s))$ is zero, to numerical precision. This implies a structural unidentifiability over areas of parameter space preserving the product $\theta_{17}\theta_{22}$, as given in \eqref{eq:jakSubspace1}. Similarly, the trajectory in  Figure \ref{fig:jakStat}b  preserved the expressions $\log \theta_{15} - \log \theta_{11}$, and $\log \theta_{15} + \log \theta_{21}$, implying the unidentifiability of \eqref{eq:jakSubspace2}.

\begin{figure}[h] 
\includegraphics[width=8.6cm]{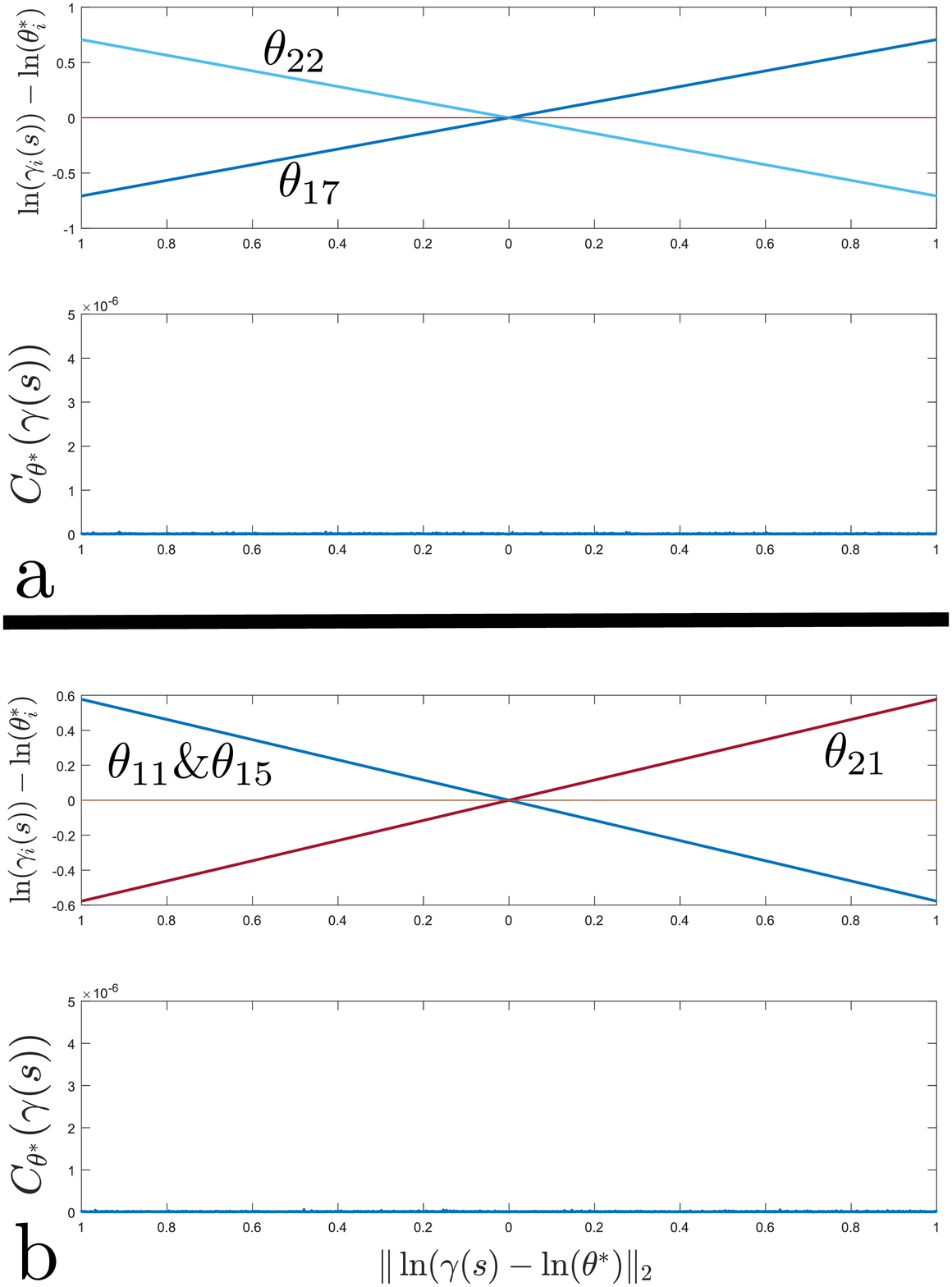}
\centering
\caption{\label{fig:jakStat} The $x$-axes of all plots in the panels depict the distance, in log-space, between the nominal parameter vector $\theta^*$, and that of $\gamma(s)$, as $s$ varies. Each line on the top-half figures of a and b represents the relative change of a particular (labelled) parameter component $\theta_i$, from $\theta^*_i$, as $s$ varies. The bottom half figures represent the cost associated with the parameter vector $\gamma(s)$, relative to $\theta^*$, as $s$ varies. Note that in the top half of figure b, both $\theta_{11}$ and $\theta_{15}$ are represented by the same line: they are overlaid so closely as to be indistinguishable.}%
 \end{figure}

The fact that the JAK-STAT model has two, one-dimensional, structural unidentifiabilities implies that the Cost Hessian has a two-dimensional null-space. As stated, two eigenvectors spanning the null-space set the initial directions $\gamma'(0)$ of the four generated curves. It is possible, however, to synthesise multiple such eigenvector pairs, as a two dimensional linear space does not have a unique orthonormal basis. Using the premise that different structural unidentifiabilities will involve different parameters, we used an eigenvector pair whose nonzero entries did not correspond at all, resulting in the curves of Figure \ref{fig:jakStat} and the unidentifiabilities of \eqref{eq:jakSubspace1} and \eqref{eq:jakSubspace2}. If we denote the eigenvectors chosen as $v_1$ and $v_2$, then the choice $\tilde{v}_1 = v_1 + \frac{2}{3}v_2$ and $\tilde{v}_2 = v_1 - v_2$ also forms an orthonormal basis. Both, however, have nonzero components for all five parameters involved in the structural unidentifiabilities. Generating minimally disruptive curves whose initial directions aligned with $\tilde{v}_1$ and $\tilde{v}_2$ yielded different structurally unidentifiable curves, which involved all five parameters, but satisfied \eqref{eq:jakSubspace1} and \eqref{eq:jakSubspace2}. The latter relations could be uniquely obtained from the curves by applying basic linear algebra to the linear relations preserved in the logarithms of the parameters.

\subsection{NF-$\kappa$B Regulatory Module} \label{subsec:nfkb}
We next find novel structural and practical unidentifiabilities in a benchmark differential equation model used for identifiablity analysis in the literature: the $29$-parameter model of the NF-$\kappa$B regulatory module \cite{Lipniacki2004}. The model describes an intracellular mechanism mediating dynamics of the NF-$\kappa$B protein complex. A schematic diagram, based on one provided in \cite{Lipniacki2004}, is given in Figure \ref{fig:nfkbNetwork}, and details on the system equations and parameters are provided in the Supplementary Information. The system is excited by extracellular TNF. Previous analyses \cite{Chis2011, Lipniacki2004} declared the model structurally identifiable, under relaxed assumptions necessary to maintain viability of the respective algorithms. Specifically, identifiability held so long as all parameters could be discriminated on the basis of some (unknown) time-course of TNF excitation, rather than the binary TNF signal assumed in \cite{Lipniacki2004}. In \cite{Chis2011}, sixteen parameters were furthermore fixed to maintain computational tractability. We analyse the full model, taking $\theta^*$ as the (complete) set of published parameter estimates/assumptions provided in \cite{Lipniacki2004}.

\begin{figure}[h] 
\includegraphics[width=9cm]{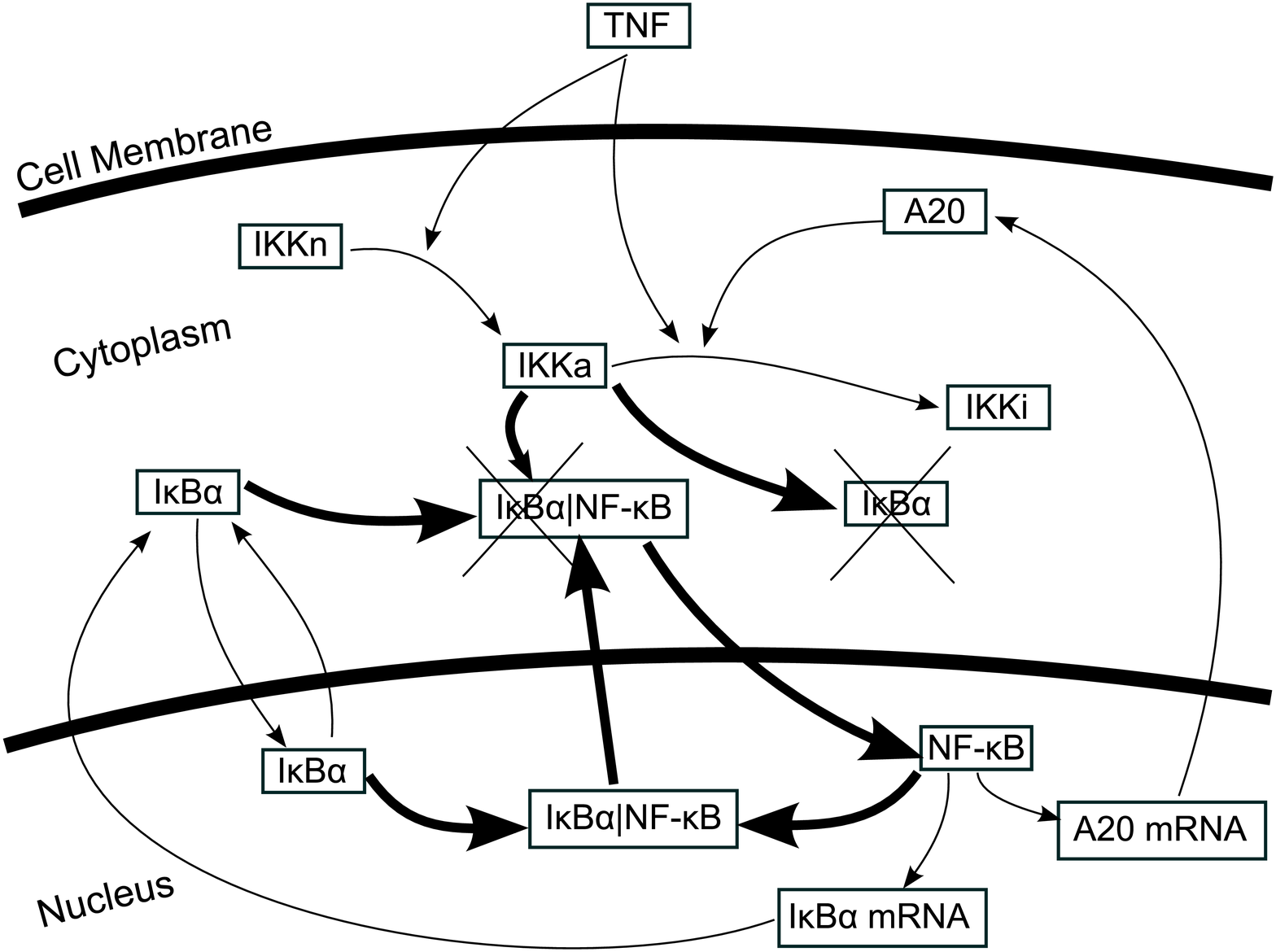}
\centering
\caption{\label{fig:nfkbNetwork} Schematic of the NF-$\kappa$B model from \cite{Lipniacki2004}, based on Figure 2 in the same paper. Bold arrows denote fast-timescale reactions. Crosses denote degradations. Extracellular Tumour Necrosis Factor (TNF) transforms neutral  I$\kappa$B kinase (IKKn) into its active form IKKa. The IKKa catalyses degradation of I$\kappa$Ba, both in its free state, and when bound in the I$\kappa$Ba|NF-$\kappa$B complex. Catalysis of the previous complex frees cytoplasmic NF$\kappa$B, which is then shuttled to the nucleus. Nuclear NF-$\kappa$B promotes both  I$\kappa$Ba and A20 transcripts. The translated A20 inactivates IKKa, thus indirectly promoting cytoplasmic I$\kappa$Ba. The translated I$\kappa$Ba binds to NF-$\kappa$B, making the latter inert and shuttling it to the cytoplasm. So nuclear NF-$\kappa$b is involved in a negative feedback loop with I$\kappa$Ba, and a positive feedback loop with A20. }%
 \end{figure}

We evolved two curves satisfying \eqref{eq:actualHam} in parallel, setting $\gamma'(0)$ in both the positive and negative directions of the eigenvector corresponding to the minimal eigenvalue of the cost Hessian. The result is shown in Figure \ref{fig:collatedSubspaces}a, which connects both curves at $\theta^*$. The top half of the figure shows the log-space evolution of each component of $\gamma(s)$, relative to $\theta^*$. The components corresponding to the rate constants $k_2$ and $c_4$ are marked, and we see that the summation $\log k_2  + \log c_4$ is preserved to numerical precision. The lower half of the figure shows, meanwhile, that the cost $C_{\theta^*}(\gamma(s))$ is zero, to numerical precision. This implies a structural unidentifiability over areas of parameter space preserving the product $k_2c_4$. This makes sense, as $k_2$ and $c_4$ respectively represent the translation (i.e. production) rate of the protein $A20$, and the inactivation rate of the kinase IKKa caused by $A20$. Since $A20$ and its transcript are not directly observed, it is clear that $A20$ only enters the model through its effect on IKKa. This refines the conclusion of \cite{Lipniacki2004}, where the effect of $A20$ on dynamics was proposed to be a combination of its \emph{concentration} (rather than translation rate), and the inactivation rate. 

We next rewrote $k_2$ as $k_2 = \frac{k_2^*c_4^* }{c_4}$, thus reducing the dimensionality of parameter space by one, and eliminating the previous structural unidentifiability. Two minimally disruptive curves were again evolved in parallel, with initial directions in the positive and negative directions of the sloppiest eigenvector. Results are shown in Figure \ref{fig:collatedSubspaces}b. Three parameters show significant movement, and are involved in a practical unidentifiability. They are listed below:
\begin{itemize}
\item $e2a$ represents the rate of transport of the I$\kappa$B$\alpha$|NF-$\kappa$B complex from the cytoplasm to the nucleus. This was appreciated as a fast-timescale reaction in \cite{Lipniacki2004}, as shown in Figure \ref{fig:nfkbNetwork}.
\item $c_{5a}$ is the constitutive degradation rate of free cytoplasmic I$\kappa$B$\alpha$.
\item $t_1$ is the rate constant for catalysis of the  I$\kappa$B$\alpha$-IKKa complex into IKKa, also described as a fast timescale reaction in \cite{Lipniacki2004}.
\end{itemize}
By varying these three parameters in turn, we see that each individual parameter has a close to zero effect on model output when varied, and the unidentifiability does not depend on their correlated change. Since $e_{2a}$ and $t_1$ are fast-timescale reaction rates, the involved reactants and products are equilibrated near-instantaneously at $\theta^*$, and increasing the rates makes no difference to observed model dynamics. Similarly, Figure \ref{fig:nfkbNetwork} implies that  I$\kappa$B$\alpha$-IKKa degradation is dominated by terms other than constitutive degradation, and decreasing $c_{5a}$ will therefore barely affect observed dynamics. Thus, elimination of the reaction corresponding to the rate $c_{5a}$, and instant equilibration of the reactions corresponding to $t_1$ and $c_{5a}$, would negligibly affect dynamics at $\theta^*$. 

Note from Figure  \ref{fig:nfkbNetwork} that as the minimally disruptive curve travels in the direction of decreasing $e_{2a}$, decreasing $t_1$, and increasing $c_{5a}$, the associated cost starts to increase, until a step change in behaviour occurs. This can be explained mechanistically. Decrease in $t_1$ and $e_{2a}$ slows the timescale of their associated reactions, until instant equilibration is no longer a valid approximation and observed dynamics are affected. Meanwhile, as $c_{5a}$ increases it starts to make a meaningful contribution to  I$\kappa$B$\alpha$ degradation, and thereby also affects observed dynamics. Crucially, our algorithm finds at what parameter values these step changes occur, thereby defining the areas of parameter space over which the instant equilibration/zero degradation approximations are valid. 

We fixed the parameters $c_{5a}$, $e_{2a}$, and $t_1$ and again evolved minimally disruptive curves as previously. Results are shown in Figure \ref{fig:collatedSubspaces}c. We see that in a region around $\theta^*$, dynamics are insensitive to parameter changes preserving the product $c_{1a}c_{4a}$. The former is the rate constant for promotion of the mRNA transcript for I$\kappa$B$\alpha$ by free nuclear NF-$\kappa$B. The latter is the translation rate for I$\kappa$B$\alpha$. Since the production rate of a protein is the product of its transcription and translation rates, then, as long as mRNA transcription is dominated by $c_{1a}$, it makes sense that preservation of the product $c_{1a}c_{4a}$ should not affect observed dynamics. However, the figure shows that this unidentifiability breaks down when $c_{1a}$ decreases sufficiently. At this point, the cost starts to increase, and a step change in the practical unidentifiability is observed. The reason is that the model incorporates constitutive transcription of I$\kappa$B$\alpha$, which becomes the dominant contributor to overall transcription in this regime. 

Again, we removed the practical unidentifiability in $c_{1a}c_{4a}$ by substituting $c_{1a} = \frac{c_{1a}^*c_{4a}^*}{c_{4a}}$, and evolved minimally disruptive curves (Figure \ref{fig:collatedSubspaces}d). The resulting curve involved $a_2$, which represents the association rate of IKKa and  I$\kappa$B$\alpha$ in the nucleus. In \cite{Lipniacki2004}, it is depicted as a fast timescale reaction, with a nominal value of $0.2$ obtained by assumption. In fact, as shown in Figure \ref{fig:collatedSubspaces}d, we can decrease $a_2$ by over twenty orders of magnitude without affecting nominal dynamics, so long as other parameters are changed in a compensatory manner. Thus we could switch off this mechanism with only a negligible change in nominal dynamics.

For our final curve evolution, we fixed $a_2$, with results shown in Figure \ref{fig:collatedSubspaces}e. This results in a complicated practical unidentifiability involving many parameters changing by several orders of magnitude without perceptibly affecting dynamics. Iteratively fixing three further parameter vectors whose values changed considerably in Figure \ref{fig:collatedSubspaces}e also resulted in complicated practical unidentifiabilities involving large parameter variations with little effect on the cost (these are not shown graphically).
We could not attach mechanistic meaning to the practical unidentifiability of Figure \ref{fig:collatedSubspaces}e, and indeed it may not exist (see e.g. \cite{Brenner2010} for a more detailed discussion).

  To demonstrate, suppose that we only measured model output at a small subset of timepoints. A whole host of new unidentifiabilities would arise, whose particular form would be dependent on the subset chosen, but which would not reflect mechanistic features of the model. These would be eliminated by measurement at more timepoints. Similarly, extra information from the NF-$\kappa$B model may be required to sufficiently constrain its parameter space. For instance, a knockdown experiment inhibiting expression of a particular gene could be conducted, and the appropriately modified model fitted to the consequent data. We could then require that for a parameter vector to be low-cost, it must not only recreate a set of nominal dynamics, but also behave appropriately when modified to model the knockdown experiment.
 
The modeller may wish to estimate a particular subset of the parameter vector. In this case, further generation of minimally disruptive curves, and consequent fixation of parameters, would be recommended. At some point, minimally disruptive curves would start to induce significant perturbations of the cost function over small length-scales of perturbation. At this point a set of parameters would have been found whose \emph{a priori} fixation would result in well-posed parameter estimation for the remaining parameters of the model.

\section{Discussion} \label{sec:disc}
We have provided a numerical approach to elucidation of both unidentifiable parameters and their associated unidentifiabilities. It can complement symbolic approaches to finding unidentifiable parameters, which additionally provide theoretical guarantees, but can suffer from scalability issues. In particular, algebraic approaches based on the Observability Rank Condition \cite{Hermann1977} are in general computationally intensive even for modestly-sized models, but work well when there are few parameters and many outputs under consideration. Our method can be used as a preconditioning tool for such algorithms, as it can highlight groups of parameters likely to be structurally unidentifiable. All other parameters can then be fixed, and unidentifiability of the simplified model checked algebraically. For instance, in the example of the NF-$\kappa$B network from \cite{Lipniacki2004}, we fixed all parameters except $k_2$ and $c_4$ (which we discovered were structurally unidentifiable through our algorithm, as discussed previously and shown in Figure \ref{fig:collatedSubspaces}a). The reduced, two-parameter model could then be easily analysed algebraically, and verified as structurally unidentifiable by direct application of the Observability Rank Condition. The same, integrated procedure was also performed successfully to verify the structurally unidentifiable parameters of the JAK-STAT model analysed previously. 

Our method additionally inferred the functional form of the structural unidentifiabilities, through analysis of the trajectories generated by the algorithm. In general, this inference may not be trivial. It is considerably simplified by considering the trajectories in log-space, as depicted in Figure \ref{fig:collatedSubspaces}, so long as the unidentifiability is rational. The reason is that if there is an unidentifiability over a relation of the form $p(\theta) = 0$, where $p(\theta)$ is rational, then a linear function of the logarithms of the parameters is preserved, and inference of the functional relation becomes a linear regression problem.

A key theme of this paper is that the parametric sensitivity characteristics of a mathematical model can depend highly on both the parameter vector considered and the length-scale of parameter perturbation. We saw in the NF-$\kappa$B network example that there were several practical unidentifiabilities that held in a certain region of parameter space, but had a distinct boundary at which a step change in the unidentifiability occurred (see e.g. Figure \ref{fig:collatedSubspaces}b, Figure \ref{fig:collatedSubspaces}c). These corresponded to the regions in which particular model approximations were valid. For instance, some such unidentifiabilities involved the rate constants of fast-timescale reactions, which were unidentifiable as long as a timescale separation existed between the reaction dynamics and observed model dynamics, but regained identifiability once they fell below a certain level. Our algorithm identified both the practical unidentifiabilities, and the boundaries in parameter space at which the form of the unidentifiabilities experienced a qualitative change. It is therefore of use in the model reduction problem, both for identifying redundant mechanisms, and quantifying the regions of parameter space at which they remain redundant. 

The majority of the computational time is taken up simulating the model at different points in parameter space, in order to calculate the associated cost function, and gradient. The density of simulations on a minimally disruptive curve varies with the `momentum' (i.e. the choice of magnitude of the initial costate $\lambda(0)$), the choice of ODE solver used to evolve the Hamiltonian flow (we used the `ode113' solver from the MATLAB toolbox \cite{MATLAB}), and the model in question. We took $C_{\theta^*}(\gamma(F)) = 0.1$ to set our momentum, for all curves generated in the previous example. $440$ internal evaluations of the cost function and its gradient were required by the ode113 solver to produce Figure \ref{fig:collatedSubspaces}a, while Figure \ref{fig:collatedSubspaces}e required $1363$ evaluations. Each combined evaluation of the cost function and gradient took about five seconds on a single-core, 3.40 GHz processor. The gradient was calculated by solving the sensitivity equations associated with the model (see \cite{Khalil2002}), rather than by finite-difference methods.

Consideration of model sloppiness at a non-local level has been considered previously for locally identifiable models (see \cite{Transtrum2010, TranstrumMS2011}). These approaches come from a different angle, and consideration of the varying purposes behind our formulation and theirs is warranted. In the approach of \cite{Transtrum2010}, the cost Hessian is taken as a Riemannian Metric on parameter space. The distance between parameter vectors is then taken as the length of a geodesic with respect to this metric. Given some nominal parameter vector $\theta^*$, geodesics with initial direction along the eigenvalues of the cost Hessian are evolved until the borders of parameter space are reached. The ratio of the longest and shortest geodesic lengths then determines the global sloppiness of the model. 

The global sloppiness quantification outlined has the advantage of being independent of the parameter units: geodesic lengths are invariant to the choice of co-ordinates. They do, however, depend on the choice of parameter $\theta^*$ from which they are initialised, which highly affects the anisotropy of the cost Hessian, as previously demonstrated for the oscillator model given in \eqref{eq:reparamOsc}. Evolution of the geodesic relies on successive linearisations of the cost function relative to the current location of the geodesic, rather than $\theta^*$, whose dynamics are `forgotten'. Thus it is determined by the infinitesimal sensitivity properties of the model, relative to a constantly changing nominal parameterisation. We instead consider non-infinitesimal sensitivity properties relative  to a fixed nominal parameterisation, or data, which allows for a link between multiscale sloppiness and the confidence regions of parameter estimates fitted to significantly noisy data. The distinction was discussed in Subsection \ref{ex:exp}, using a simple exponential decay model taken from \cite{Transtrum2010} (see also Figure \ref{fig:expComp}). Note that minimally disruptive curves and geodesics of the cost Hessian are relevant on different classes of model. Minimally disruptive curves are formulated specifically to recover structural and practical unidentifiabilities. Calculation of the acceleration term for the geodesic described, by contrast, involves inversion of the cost Hessian. This is by definition singular in locally structurally unidentifiable models, and numerically singular in locally practically unidentifiable models, and so cannot be inverted.

\section{Conclusion}
 In conclusion, we have presented a scalable algorithm allowing for systematic characterisation of the structural and practical unidentifiabilities in a model. The approach is conceptually novel, treating unidentifiable cross-sections of parameter space as solutions of an optimal control problem, which is then solved numerically.  En route, we have formulated a new approach to quantifying model sensitivity to parameter perturbations of non-infinitesimal magnitude. In the dual regime of estimation, we showed how this corresponds to the confidence regions of parameter estimates in the presence of non-infinitesimal measurement noise. The scalability, accuracy, and utility of the approach have been demonstrated through the discovery of new unidentifiabilities in benchmark models. Our hope is that once a model of a process is constructed, our methods can be used to systematically identify hidden unidentifiabilities and mechanistic redundancies. Subsequent parameter estimation and model-based analysis of the process is then made more tractable and reliable.

\appendix

\section{The correspondence between structural unidentifiability, sloppiness, and multiscale sloppiness} \label{app:section2}

In the main text we define a non-negative cost function $C_{\theta^*} \in \mathcal{C}^2$ on parameter space, such that $C_{\theta^*}(\theta^*) = 0$. Here $\mathcal{C}^k$ denotes the space of $k$-times differentiable, continuous functions. We then define multiscale sloppiness, for parameter perturbations of length scale $\delta$, and in the absence of data, as
\begin{align*}
& {\cS}_{\theta^*}(\delta) := \frac{{\cS}^{max}_{\theta^*}(\delta)}{{\cS}^{min}_{\theta^*}(\delta)} := \frac{\max_{\theta} C_{\theta^*}(\theta) : \|\theta - \theta^*\|^2_2 \leq \delta}{\min_\theta  C_{\theta^*}(\theta) : \|\theta - \theta^*\|^2_2 = \delta }  \\
& {\cS}_{\theta^*}(0) :=  \lim_{\delta \to 0} {\cS}_{\theta^*}(\delta).
\end{align*}

We state that ${\cS}_{\theta^*}(0)$ agrees with the traditional quantification of sloppiness at $\theta^*$, i.e. the ratio of the maximum to the minimum eigenvalue of $\nabla^2C_{\theta^*}(\theta^*)$. We now prove the claim. Since $\theta^*$ is a local minimum of $C_{\theta^*}$, we have both that $\nabla_\theta C_{\theta^*}(\theta^*) = 0$, and that $\nabla^2_\theta C_{\theta^*}(\theta^*)$ is positive semidefinite. If we consider a perturbation $\delta \theta$, the remainder theorem ensures
\begin{align}
C_{\theta^*}(\theta^* + \delta \theta) = \langle \delta\theta,\nabla^2C_{\theta^*}(\theta^*)\delta \theta \rangle + O(\|\delta \theta\|^3_2). \label{eq:taylorCost}
\end{align}
 Therefore, for $\|\delta\theta\|^2_2 \leq \delta $, we have 
 \begin{align}
 &\lim_{\delta \to 0}\frac{\cS^n_{\theta^*}(\delta)}{\cS^d_{\theta^*}(\delta)}
 = \lim_{\delta \to 0} \displaystyle\frac{\max_{\delta\theta} \langle \delta\theta,\nabla^2C_{\theta^*}(\theta^*)\delta \theta \rangle}{\min_{\delta\theta} \langle \delta\theta,\nabla^2C_{\theta^*}(\theta^*)\delta \theta \rangle} : \label{eq:limitHess}
 \end{align}

Clearly the numerator and denominator of \eqref{eq:limitHess} must respectively lie in the eigenspace of the maximal and minimal eigenvectors of  $\nabla^2C_{\theta^*}(\theta^*)$.
  
{  
 We state in the main text that the shape of the $\epsilon$-uncertainty region {$U_{\theta^*}(\epsilon):= \{\theta: C_{\theta^*}(\theta) \leq \epsilon\}$}, approximates a hyper-ellipse in the limit of decreasing $\epsilon$ for structurally identifiable models. Let us formalise this statement mathematically.

 \begin{lemma} \textbf{Limiting behaviour of $\mathbf{U_{\theta^*}(\epsilon)}$} \label{lem:Ulimit}
Given a 
structurally identifiable parameter $\theta^*$, and a compact parameter space $\Theta$, consider the volume-normalised version of $U_{\theta^*}(\epsilon)$ given by
\begin{align}
 \tilde{U}_{\theta^*}(\epsilon) = \left\{ \theta^* + \frac{\delta\theta}{\sqrt{2\epsilon}}: \theta^* +  \delta \theta \in U_{\theta^*}\left({\epsilon}\right) \right\}. \label{eq:normalisedU}
\end{align}
 Then 
\begin{align*}
\lim_{\epsilon \to 0} \tilde{U}_{\theta^*}(\epsilon) = \theta^* + \left\{ \delta \theta:  \langle \delta\theta,\nabla^2C_{\theta^*}(\theta^*)\delta \theta \rangle \leq 1 \right\}.
\end{align*}
\end{lemma}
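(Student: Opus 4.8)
The plan is to show that the rescaled region $\tilde U_{\theta^*}(\epsilon)$ converges, as $\epsilon \to 0$, to the solid hyper-ellipse $E := \theta^* + \{\delta\theta : \langle \delta\theta, \nabla^2 C_{\theta^*}(\theta^*)\delta\theta\rangle \le 1\}$, in an appropriate set-convergence sense (e.g. Hausdorff distance, or equivalently the Painlev\'e--Kuratowski sense). The key device is the second-order Taylor expansion \eqref{eq:taylorCost}, which holds since $\theta^*$ is a local minimum of $C_{\theta^*}$, so that $\nabla_\theta C_{\theta^*}(\theta^*) = 0$ and $\nabla^2_\theta C_{\theta^*}(\theta^*) \succeq 0$. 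Structural identifiability of $\theta^*$ means $U_{\theta^*}(0) = \{\theta^*\}$; combined with continuity and compactness of $\Theta$, this forces $\operatorname{diam} U_{\theta^*}(\epsilon) \to 0$, so that for small $\epsilon$ the only parameters in $U_{\theta^*}(\epsilon)$ are in a shrinking neighbourhood of $\theta^*$ where the Taylor remainder is genuinely cubic and negligible.

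First I would fix the change of variables: writing $\theta = \theta^* + \delta\theta$ and then $\delta\theta = \sqrt{2\epsilon}\, u$, membership $\theta^* + \delta\theta \in U_{\theta^*}(\epsilon)$ becomes $C_{\theta^*}(\theta^* + \sqrt{2\epsilon}\,u) \le \epsilon$. Dividing the Taylor expansion \eqref{eq:taylorCost} through by $\epsilon$ gives
\begin{align*}
\frac{C_{\theta^*}(\theta^* + \sqrt{2\epsilon}\,u)}{\epsilon} = \langle u, \nabla^2 C_{\theta^*}(\theta^*) u\rangle + O\!\left(\sqrt{\epsilon}\,\|u\|_2^3\right),
\end{align*}
so that $\tilde U_{\theta^*}(\epsilon) = \theta^* + \{u : \langle u,\nabla^2 C_{\theta^*}(\theta^*)u\rangle + O(\sqrt\epsilon\,\|u\|_2^3) \le 1\}$. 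Since the preceding paragraph shows that the relevant $u$ range over a bounded set (indeed $\|u\|_2 \le \operatorname{diam}U_{\theta^*}(\epsilon)/\sqrt{2\epsilon}$, which I would need to argue stays bounded), the remainder term is uniformly $O(\sqrt\epsilon)$ on that set, and the defining inequality converges uniformly to $\langle u, \nabla^2 C_{\theta^*}(\theta^*) u\rangle \le 1$. I would then convert this uniform convergence of the constraint functions into set convergence by a standard sandwiching argument: for any $\eta > 0$, for $\epsilon$ small enough the region is trapped between $\{\langle u, Hu\rangle \le 1-\eta\}$ and $\{\langle u, Hu\rangle \le 1+\eta\}$ (with $H := \nabla^2 C_{\theta^*}(\theta^*)$), and both of these squeeze down to $E$ as $\eta \to 0$ because $H$ is (by structural identifiability, see the caveat below) positive definite, so its sublevel sets depend continuously on the level.

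The main obstacle is the boundedness of $\{u\}$, i.e. controlling $\operatorname{diam}U_{\theta^*}(\epsilon)/\sqrt{2\epsilon}$ from above --- this is exactly where structural identifiability must be used quantitatively, and it is subtle because a priori $H$ could be singular (the paper itself stresses that structural identifiability does \emph{not} imply $H$ is nonsingular, cf. the model \eqref{ex:lpvODE}). If $H$ is singular, the limiting ellipsoid $E$ is an unbounded cylinder and the statement, taken literally as Hausdorff convergence, needs reinterpretation; I expect the intended reading (consistent with Figure \ref{fig:uncertaintyToEllipse}) restricts either to the nonsingular case or to convergence tested against bounded sets / on the complement of the kernel. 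I would handle this by stating the lemma under the implicit hypothesis that $H \succ 0$ (noting that otherwise one restricts to the orthogonal complement of $\ker H$), and then the quantitative identifiability bound follows: on the compact set $\Theta$, the continuous function $C_{\theta^*}$ vanishes only at $\theta^*$, and near $\theta^*$ it is bounded below by $\tfrac12\lambda_{\min}(H)\|\delta\theta\|_2^2(1 - o(1))$, which yields $\|u\|_2 \le (\lambda_{\min}(H))^{-1/2}(1+o(1))$ on $\tilde U_{\theta^*}(\epsilon)$, closing the argument. The remaining steps --- uniformity of the cubic remainder on this bounded set, and the sandwich--squeeze for set convergence --- are routine.
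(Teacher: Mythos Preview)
Your approach is essentially the paper's: rescale by $\sqrt{2\epsilon}$, Taylor-expand $C_{\theta^*}$ about $\theta^*$, divide through by $\epsilon$, and note the cubic remainder is $O(\sqrt\epsilon)$. The paper precedes this with a compactness argument showing $U_{\theta^*}(\epsilon)$ is connected and shrinks to $\{\theta^*\}$ for small $\epsilon$, but is otherwise less careful than you are: it does not explicitly address the uniform boundedness of the rescaled variable (which, as you correctly observe, hinges on $H\succ 0$), nor the singular-$H$ case that its own example \eqref{ex:lpvODE} shows is compatible with structural identifiability.
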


\begin{proof}
Since $\theta^*$ is structurally identifiable, we know that $U_{\theta^*}(0) = \{\theta^*\}$. Our first step is to show, for sufficiently small $\epsilon$, that $U_{\theta^*}(\epsilon)$ is a connected set. If not, then there exists a sequence $\{\theta^{(i)}\}_{i=1}^{\infty}$ such that, for all $i \in \mathbb{N}$:
\begin{align*}
\exists \kappa > 0 : \| \theta^{(i)} - \theta^*\|_2^2 > \kappa \text{ and } C_{\theta^*}(\theta^{(i)}) < \frac{1}{i}.
\end{align*}
The Extreme Value Theorem then guarantees the existence of a limiting parameter vector $\theta^{(\infty)} \in \Theta$ satisfying
\begin{align*}
\exists \kappa > 0 : \| \theta^{(\infty)} - \theta^*\|_2^2 > \kappa \text{ and } C_{\theta^*}(\theta^{(\infty)}) = 0,
\end{align*}
contradicting structural identifiability. 

Since  $U_{\theta^*}(\epsilon)$ is connected for sufficiently small $\epsilon$, and $C_{\theta^*}$ is continuous, we know that
\begin{align*}
\lim_{\epsilon \to 0} \left[ \max_{\delta \theta \in \Theta} \| \delta \theta \|_2^2: C_{\theta^*}(\theta^* + \delta \theta) < \epsilon \right] = 0.
\end{align*}

Next, taking a Taylor expansion of \eqref{eq:epsilonUncertain}, and using the definition of $\tilde{U}_{\theta^*}$ from \eqref{eq:normalisedU},  we see that 
\begin{align*}
&\tilde{U}_{\theta^*}(\epsilon) = \left\{ \delta \theta:  \left( \frac{ (\sqrt{2 \epsilon})^2}{2} \langle \delta \theta, [\nabla^2_\theta C_{\theta^*} (\theta^*)] \delta \theta \rangle \right) + \mathcal{O}\left( \|2 \sqrt{\epsilon} \delta \theta\|_2^3 \right) \leq \epsilon \right\}. 
\end{align*}
We can divide the above equation through by $\epsilon$, and note, by definition, that
\begin{align*}
\lim_{\epsilon \to 0} \frac{\mathcal{O}\left( \|2 \sqrt{\epsilon} \delta \theta\|_2^3\right)}{\epsilon} = 0,
\end{align*}
to get
\begin{align*}
\lim_{\epsilon \to 0} \tilde{U}_{\theta^*}(\epsilon)  = \left\{ \delta \theta:  \left(  \langle \delta \theta, [\nabla^2_\theta C_{\theta^*} (\theta^*)] \delta \theta \rangle \right)  \leq 1 \right\}
\end{align*}
as required. 
\end{proof}

 }


 The main text states that structural unidentifiability is equivalent to the existence of a strictly positive $\delta$ such that $\cS_{\theta^*}(\delta)$ is infinite. We prove the forward implication first. Structural unidentifiability implies there exists $\hat{\theta} \neq \theta^*$ such that $C_{\theta^*}(\hat{\theta}) = 0$. So if we take $\delta = \|\hat{\theta} - \theta^*\|_2^2$, then we are assured that $\delta > 0$. But this means that $\cS^d_{\theta^*}(\delta) = 0$. We now prove the backward implication. Continuity of $C_{\theta^*}$ is assumed. Therefore $C_{\theta^*}$ must attain a finite maximum value on any compact set, by the Extreme Value Theorem. This implies that $\cS^n_{\theta^*}(\delta)$ is finite. So $\cS_{\theta^*}(\delta)$ can only be infinite,  if $\cS^d_{\theta^*}(\delta) = 0$. {Therefore} there exists {a} $\hat{\theta}$ such that $\|\hat{\theta} - \theta\|_2^2 = \delta$ and $C_{\theta^*}(\hat{\theta}) = 0$. Since $\delta > 0$, we are assured that $\hat{\theta} \neq \theta^*$, and thus $\hat{\theta}$ satisfies the conditions for structural unidentifiability.
 
  {
 \section{The correspondence between minimally disruptive parameters and minimally disruptive curves} \label{app:section4}

The Lemma below shows that continuous branches of minimally disruptive parameters are minimally disruptive curves.

 \begin{lemma} \label{lem:mdCurves}
 
Suppose a function $\Xi^\Delta_{min}(\delta)$ satisfies
 \begin{align}
& \Xi^\Delta_{min}(\delta) \in D^{min}_{\theta^*}(\delta) \ \forall \delta \in [0,\Delta], \label{appeq:choiceD} 
 \end{align}
 for some $\Delta > 0$, and where $D^{min}_{\theta^*}(\delta)$ denotes the minimally disruptive parameter at length-scale $\delta$. Then $\Xi^\Delta_{min}(\delta)$ is a solution of the following optimization problem:
   \begin{align}
& \gamma^* = \min_{\gamma \in \Gamma^F_{\theta^*}} : \int_{\gamma} C_{\bullet}: \label{appeq:ldMin} 
\\ \Gamma^F_{\theta^*} =  
&\left\{  \gamma:[0,F] \to \Theta: \begin{array}{ll} \gamma(0) = \theta^*;  &\frac{\mathrm{d}}{\mathrm{d}s} \|\gamma(s) - \theta^*\|_2 > 0 \\ \|\gamma'(s)\|_2=1
   \end{array}\right\}. \nonumber
\end{align}
\end{lemma}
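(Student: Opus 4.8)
The plan is to show that any curve $\gamma \in \Gamma^F_{\theta^*}$ has cost integral at least that of $\Xi^\Delta_{min}$, after matching endpoints appropriately. The key observation is a reparameterization: a curve in $\Gamma^F_{\theta^*}$ is unit-speed and has strictly increasing distance $\delta(s) := \|\gamma(s) - \theta^*\|_2$ from $\theta^*$, so $\delta$ is a strictly monotone, hence invertible, function of arclength $s$. First I would use this to rewrite the objective $\int_\gamma C_{\bullet} = \int_0^F C_{\bullet}(\gamma(s))\,\mathrm{d}s$ as an integral over the distance variable, $\int_0^{\Delta} C_{\bullet}(\gamma(\delta))\,\frac{\mathrm{d}s}{\mathrm{d}\delta}\,\mathrm{d}\delta$, where $\Delta = \delta(F)$ and $\gamma(\delta)$ abusively denotes the curve reparameterized by distance. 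Since $\gamma$ is unit-speed, $\frac{\mathrm{d}s}{\mathrm{d}\delta} = \|\gamma'(\delta)\|_2 / \frac{\mathrm{d}\delta}{\mathrm{d}s}^{-1}\cdots$ — more carefully, chain rule gives $\frac{\mathrm{d}\delta}{\mathrm{d}s} = \langle \gamma'(s), \gamma(s)-\theta^*\rangle/\|\gamma(s)-\theta^*\|_2 \le \|\gamma'(s)\|_2 = 1$ by Cauchy–Schwarz, so $\frac{\mathrm{d}s}{\mathrm{d}\delta} \ge 1$, with equality exactly when $\gamma'$ is radial.

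Second, I would bound the integrand pointwise: for each fixed $\delta$, the point $\gamma(\delta)$ lies on the sphere $\|\theta - \theta^*\|_2 = \delta$, so by the very definition of $D^{min}_{\theta^*}(\delta)$ as the minimizer of $C_{\bullet}$ over that sphere, we have $C_{\bullet}(\gamma(\delta)) \ge C_{\bullet}(\Xi^\Delta_{min}(\delta))$. Combining with $\frac{\mathrm{d}s}{\mathrm{d}\delta} \ge 1$ and nonnegativity of $C_{\bullet}$ yields
\begin{align*}
\int_\gamma C_{\bullet} = \int_0^{\Delta} C_{\bullet}(\gamma(\delta))\,\frac{\mathrm{d}s}{\mathrm{d}\delta}\,\mathrm{d}\delta \ge \int_0^{\Delta} C_{\bullet}(\Xi^\Delta_{min}(\delta))\,\mathrm{d}\delta.
\end{align*}
Third, I would check that $\Xi^\Delta_{min}$ itself, suitably reparameterized to unit speed, is an admissible competitor in $\Gamma^F_{\theta^*}$ for the appropriate $F$ (namely its own arclength), so that the right-hand side above is exactly $\int_{\Xi^\Delta_{min}} C_{\bullet}$; here one uses that $\|\Xi^\Delta_{min}(\delta) - \theta^*\|_2 = \delta$ is strictly increasing by construction, and that a unit-speed reparameterization exists and preserves the value of the integral $\int C_{\bullet}$ (a path integral with respect to arclength is parameterization-independent). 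This shows $\Xi^\Delta_{min}$ attains the minimum, i.e. solves \eqref{appeq:ldMin}.

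The main obstacle I anticipate is regularity: the hypothesis only gives $\Xi^\Delta_{min}$ continuous, not differentiable, so "$\|\Xi^\Delta_{min}\,'(s)\|_2 = 1$" and the change of variables need care — one may need to argue via rectifiability, or restrict attention to the distance parameterization throughout and only invoke the arclength constraint through a limiting/approximation argument. A second subtlety is the strict inequality $\frac{\mathrm{d}}{\mathrm{d}s}\|\gamma(s)-\theta^*\|_2 > 0$ in the definition of $\Gamma^F_{\theta^*}$: it ensures invertibility of $\delta(s)$ but must be reconciled with the possibility that the competitor $\gamma$ is only weakly monotone in $\delta$ on a boundary case; I would handle this by noting the constraint is open and the infimum over the open set equals the infimum over its closure by continuity of the objective. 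Modulo these analytic points, the argument is the pointwise-minimality-plus-reparameterization sketch above.
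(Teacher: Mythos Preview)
Your approach is essentially the same as the paper's: reparameterize each admissible curve by its distance $\delta(s)=\|\gamma(s)-\theta^*\|_2$ from $\theta^*$ (possible because this distance is strictly increasing), and then invoke pointwise minimality $C_\bullet(\gamma(\delta))\ge C_\bullet(\Xi^\Delta_{min}(\delta))$ on each sphere. The paper packages the same two ingredients as a proof by contradiction---assuming a strictly better competitor $\gamma^*$ exists, it argues that the inequality on integrands forces $\gamma^*$ to have strictly smaller total arclength than $\Xi^\Delta_{min}$, contradicting that both lie in $\Gamma^F_{\theta^*}$---whereas you argue directly, making explicit the Cauchy--Schwarz step $\mathrm{d}s/\mathrm{d}\delta\ge 1$ that the paper's arclength comparison uses implicitly.

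Two remarks. First, your identification of the regularity obstacle (differentiability of $\Xi^\Delta_{min}$, well-posedness of the unit-speed reparameterization) is apt; the paper simply glosses over this. Second, there is a genuine endpoint-matching subtlety that you touch on but do not resolve: the competitor's terminal distance $\delta_\gamma(F)$ need not equal $\Delta$ (it could even exceed $\Delta$ if $\gamma$ is more nearly radial than $\Xi^\Delta_{min}$), so your lower bound $\int_0^{\delta_\gamma(F)} C_\bullet(\Xi^\Delta_{min}(\delta))\,\mathrm{d}\delta$ is not immediately identifiable with $\int_{\Xi^\Delta_{min}} C_\bullet$. The paper's contradiction argument sidesteps this by comparing total arclengths rather than endpoint distances, though its own bookkeeping of the two upper limits is loose. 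Your direct route would close cleanly if you additionally argue that the product $C_\bullet(\gamma(\delta))\cdot(\mathrm{d}s/\mathrm{d}\delta)$ dominates $C_\bullet(\Xi^\Delta_{min}(\delta))\cdot(\mathrm{d}s_\Xi/\mathrm{d}\delta)$ in an integrated sense over the common arclength $F$, which is effectively what the paper's arclength contradiction encodes.
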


\begin{proof} First note that for any $\gamma \in \Gamma^F_{\theta^*}$, we can take the reparameterisation 
\begin{align}
\delta(s) =  \|\gamma(s) - \theta^*\|_2. \label{appeq:deltaRep}
\end{align}
We have, for $s> 0$, that  $\frac{\mathrm{d}}{\mathrm{d}s} \|\gamma(s) - \theta^*\|_2 > 0$. This ensures that $\delta(s)$ is a bijective function, and so it has an inverse. We can therefore parameterise the line integral of \eqref{appeq:ldMin} using either $\delta$ or $s$. Note however that the reparameterisation $\delta(s)$ will depend on the specific curve $\gamma$. 

Suppose that a function $\Xi^\Delta_{min}$ satisfying \eqref{appeq:choiceD}  exists, but is not a minimiser of \eqref{appeq:ldMin}. We take the actual optimiser of \eqref{appeq:ldMin} as $\gamma^*$ . Reparameterisations of the form \eqref{appeq:deltaRep} will be different for $\Xi^\Delta_{min}$ and $\gamma^*$, so we denote them $\delta^1(s)$ and $\delta^2(s)$ respectively. Now by assumption, we have that
\begin{align}
& \int_{0}^{\delta^1(F)} C_{\bullet} \big(\gamma^*(\delta )\big) \left\| \frac{\mathrm{d}}{\mathrm{d} \delta} \big[\gamma^*(\delta) \big] \right\|_2 \ d \delta 
< &\int_{0}^{\delta^2(F)} C_{\bullet} \big(\Xi^\Delta_{min}(\delta )\big)  \left\| \frac{\mathrm{d}}{\mathrm{d} \delta} \left[ \Xi^\Delta_{min}(\delta )\right]  \right\|_2 \  d \delta. \label{eq:minDisIneq}
\end{align}

Since $\Xi^\Delta_{min}$, by definition, only traces over minimally disruptive parameters, we know, for any $\delta \in \big(0,\delta^2(F)\big]$, that
\begin{align*}
C_{\bullet} \big(\gamma^*(\delta )\big) \geq  C_{\bullet} \big(\Xi^\Delta_{min}(\delta )\big).
\end{align*}
Furthermore the inequality is strict for at least one such $\delta$, in order that $\gamma^*$ itself does not only trace over minimally disruptive parameters. For \eqref{eq:minDisIneq} to hold, we therefore require 
\begin{align*}
\int_{0}^{\delta^1(F)}   \left\| \frac{\mathrm{d}}{\mathrm{d} \delta} \big[\gamma^*(\delta) \big] \right\|_2  \ d \delta < \int_{0}^{\delta^2(F)}  \left\| \frac{\mathrm{d}}{\mathrm{d} \delta} \left[ \Xi^\Delta_{min}(\delta )\right]  \right\|_2 \  d \delta.
 \end{align*} 
The LHS and RHS of the above inequality are equal to the lengths of the curves $\gamma^*$ and $\Xi^\Delta_{min}$, respectively. So $\gamma^*$ must be strictly shorter than  $\Xi^\Delta_{min}$. However, this contradicts \eqref{appeq:ldMin}, which demands that both $\gamma^*$ and $\Xi^\Delta_{min}$ have length $F$. So $\gamma^*$ cannot exist. 
\end{proof}
 }

\section{Necessary conditions on minimally disruptive curves} \label{app:section3}
 \subsection{The solution of an abstract optimal control problem}
We begin by providing the outline of a standard solution to an abstract optimal control problem. This problem is later specialised to provide necessary conditions on the minimally disruptive curves described in the paper. A more detailed solution of the abstract problem is provided in e.g. \cite[Ch.3]{Bryson1975}. 

Consider a controlled dynamical system of the following form:
\begin{align*}
&\dot{x}(t) = f(x(t),u(t)) &x(0) = x_0 \\
&x(t) \in \mathbb{R}^n &u(t) \in \mathbb{R}^m
\end{align*}
Here $t \in \mathbb{R}^+$ denotes time, while $u(t)$ is a vector of time-varying control inputs, which the user may vary so as to affect dynamics of {the state}, $x(t)$. Dotted variables denote time derivatives. Given a globally nonegative potential $L(x,u) \in \mathbb{R}^+$, we wish to find necessary conditions on $u(t)$ for minimisation of the following action:
\begin{subequations}
\begin{align}
J(u) = \int^{t_F}_0 L(x(t),u(t))\ dt, \label{eq:abstractFunctional}
\end{align}
for some fixed time $t_F$, and subject to constraints of the form
\begin{align}
&K_1(x(t),u(t)) \leq 0, &K_2(x(t),u(t) = 0. \label{eq:inputConstraints}
\end{align}
\label{eq:optConds}
\end{subequations}
Note that along trajectories of the dynamical system, \eqref{eq:abstractFunctional} can be rewritten as
\begin{align*}
J(u) =  \int^{t_F}_0 L(x(t),u(t)) + \langle \lambda(t), \big[ \dot{x}(t) - f(x(t),u(t))   \big] \rangle \ dt,
\end{align*}
for any function $\lambda: [0,t_F] \to \mathbb{R}^n$. 
{Consider the } Hamiltonian:
\begin{align}
H(x,u,\lambda) = L(x,u) + \langle\lambda(t),f(x,u)\rangle. \label{eq:vanillaHam}
\end{align}
{Integrating by parts with respect to $t$ gives}
\begin{align*}
J(u) = &\int^{t_F}_0 H(x(t),u(t),\lambda(t)) + \langle \dot{\lambda}(t),x(t)\rangle \ dt \\ + &\langle\lambda(0),x_0\rangle - \langle\lambda(t_F),x(t_F)\rangle.
\end{align*}
From this, the first variation of $J$, given fixed $x_0$ and free $x(t_F)$, is given by
\begin{align*}
\begin{split}
\delta J(\delta u) &=\int^{t_F}_0 \left \langle \left[ \frac{\partial H}{\partial x}\big(x(t),u(t),\lambda(t) \big) + \dot{\lambda}(t)\right], \delta x(t) \right\rangle  \ dt  \\
& + \int^{t_F}_0  \left\langle \frac{\partial H}{\partial u}\big(x(t),u(t),\lambda(t) \big), \delta u(t) \right\rangle\ dt  - \lambda(t_F)\delta x(t_F).
\end{split}
\end{align*}
Since $\lambda(t)$, which we term the costate, has thus far been arbitrary, we {impose}
\begin{align*}
&\dot{\lambda}(t) = -\frac{\partial H}{\partial x}\big(x(t),u(t),\lambda(t)\big) 
&\lambda(t_F) = 0,
\end{align*}
to get
\begin{align*}
\delta J(\delta u) = &\int^{t_F}_0 \left \langle \frac{\partial H}{\partial u}\big(x(t),u(t),\lambda(t)\big), \delta u(t) \right\rangle \ dt.
\end{align*}
The form of the costate derivative also ensures time-invariance of the Hamiltonian, as can be seen by explicit differentiation.

Without the presence of the constraints \eqref{eq:inputConstraints}, the condition $\delta J = 0$ for all perturbations $\delta u:[0,t_F] \to \mathbb{R}^m$, would be a necessary condition for first order optimality. This is equivalent to the condition $ \frac{\partial H}{\partial u}\big(x(t),u(t),\lambda(t)\big) = 0$ for all $t \in [0,t_F]$. As we are in fact looking for a constrained minimum, we only need deal with perturbations $\delta u$ preserving the constraints. For this, we repeat the previous arguments, but apply the Karush-Kuhn-Tucker (KKT) conditions on constrained optima at each timepoint. This gives a modified Hamiltonian:
\begin{align}
\tilde{H}(x,u,\lambda) = H(x,u,\lambda) + \mu_1(t) K_1(x,u) + \mu_2(t)K_2(x,u), \label{eq:modHam}
\end{align}
where $\mu_1(t)$ is a time-varying KKT multiplier satisfying
\begin{align*}
&\mu_1(t) \geq 0 & \mu_1(t) K_1\big(x(t),u(t)\big)= 0,
\end{align*}
and $\mu_2(t)$ is a time varying Lagrange multiplier. The necessary conditions on optimality become:
\begin{subequations}
\begin{align}
&\dot{\lambda}(t) = -\frac{\partial \tilde{H}}{\partial x}\big(x(t),u(t),\lambda(t)\big) \label{eq:costateDeriv}\\
&\lambda(t_F) = 0 \label{eq:finalCostate} \\ 
&\frac{\partial \tilde{H}}{\partial u}(x(t),u(t),\lambda(t)) = 0 \ \ \ \forall t \in [0,t_F]. \label{eq:stationarityCond}
\end{align}
\end{subequations}
Note that the value of $\tilde{H} - H$ is null by definition, although their partial derivatives differ. The value of $\tilde{H}$ itself is set by \eqref{eq:finalCostate} and \eqref{eq:vanillaHam}, which together imply that 
\begin{align}
\tilde{H}(x(t),u(t),\lambda(t)) = L(x(t_F),u(t_F)). \label{eq:HamValue}
\end{align}

\subsection{Specialisation to minimally disruptive curves} \label{app:min}
Consider a model with nominal parameter vector $\theta^*$, and a non-negative cost function $C_{\bullet}(\theta)$, where $C_{\bullet}$ can represent either $C_{\theta^*}$ or $C_D$, depending on the presence of observational data. In the paper, a minimally disruptive curve $\gamma^*$ over $[0,\Delta]$ and relative to $\theta^*$, is defined as a solution of the following optimization problem:
\begin{align*}
\gamma^* = \min_{\gamma} \int^{F}_0 C_{\bullet}(\gamma(s)) \| \gamma'(s)\|_2 \ ds
\end{align*}
subject to
\begin{align*}
& \gamma(0) = \theta^* &\langle \gamma'(s),\gamma(s) - \theta^* \rangle > 0 \\
&\gamma \in \mathcal{C}^1 &\|\gamma'(s)\|_2 = 1.
\end{align*}
This is a special case of the optimization \eqref{eq:optConds}. In particular, the set of system states $x \in \mathbb{R}^n$ in \eqref{eq:optConds} is replaced by the set of parameter vectors $\theta \in \mathbb{R}^q$, and the time dependent state trajectory $x(t)$ is replaced by the length-scale dependent curve $\gamma(s)$. We take complete control over $\gamma'(s)$ by setting $\gamma'(s) = u$. 
The Lagrangian forming the integrand of \eqref{eq:abstractFunctional} is then given by $C_{\bullet}(\gamma(s))\|u\|_2$. The constraints $K_1$ and $K_2$ are given by
\begin{subequations}
\begin{align}
&K_1\big (\gamma(s),u(s) \big) = - \langle u(s), \gamma(s) - \theta^*\rangle \\ &K_2\big (\gamma(s),u(s) \big) = \|u\|_2 - 1.
\end{align}
\end{subequations}
and the modified Hamiltonian \eqref{eq:modHam} is
\begin{align*}
&\tilde{H}(\gamma(s),u(s),\lambda(s)) = C_{\bullet}\big(\gamma(s)\big) +   \langle\lambda(s),u(s) \rangle \\& - \mu_1(s)\langle< u(s), \gamma(s) - \theta^*\rangle>  + \mu_2(s) \big(\|u\|_2 - 1\big).
\end{align*}

Thus the optimality condition \eqref{eq:stationarityCond} becomes:
\begin{align}
2\mu_2(s)u(s) - \mu_1(s)\big(\gamma(s)-\theta^*\big) + \lambda(s) = 0, \label{eq:ldStat}
\end{align}
while  \eqref{eq:HamValue} gives 
\begin{align*}
\tilde{H}\big(\gamma(s),u(s),\lambda(s)\big) &= C_{\bullet}\big(\gamma(F) \big)  
\end{align*}
If we take the inner product of \eqref{eq:ldStat} with $u(s)$, while heeding the constaints, we see that
\begin{align*}
&2\mu_2(s) = C_{\bullet}(\gamma(s)) - \tilde{H}\big(\gamma(s),u(s),\lambda(s)\big).
\end{align*}
Given $\gamma(s)$, $\lambda(s)$ and $C_{\theta^*}\big(\gamma(s)\big)$, both $\mu_1(s)$ and $u(s)$ can now found as the solution of a {system of linear equations}. Given a user-specified $u(0)$ and $C_{\theta^*}(\gamma(F))$, the pair of curves $\{\gamma(s),\lambda(s)\}$ can be explicitly evolved as an ordinary differential equation (ODE). Note that $F$ need not be defined explicitly.

\begin{figure}[h]                                 
\includegraphics[width=12cm]{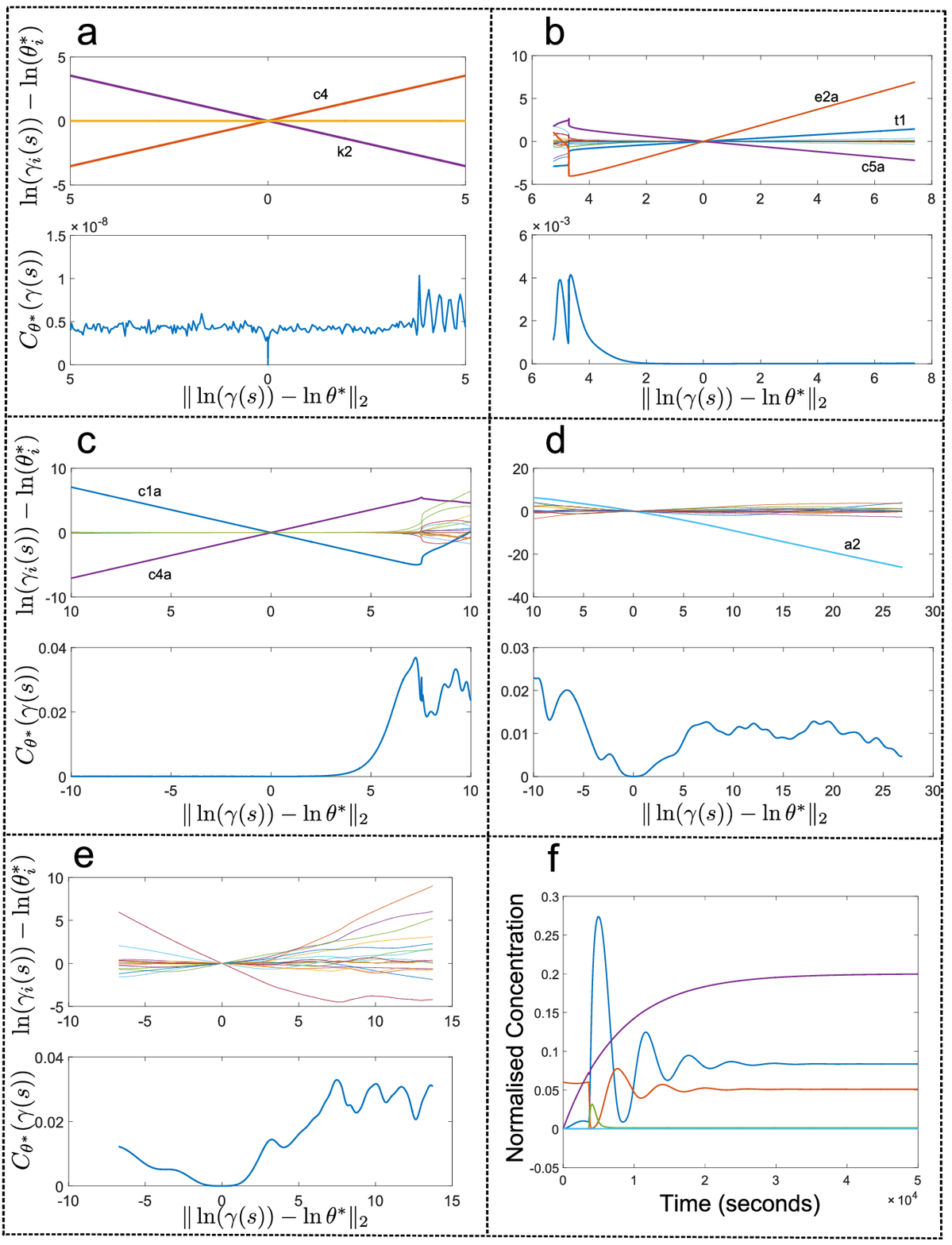}
\flushleft
\caption{\label{fig:collatedSubspaces} Panels a-e depict structural and practical unidentifiabilities in the NF-$\kappa$B model from \cite{Lipniacki2004}. The $x$-axes of all plots in these panels depict the distance, in log-space, between the nominal parameter vector $\theta^*$, and that of $\gamma(s)$, as $s$ varies. Each line on the top-half figures of these panels represents the relative change of a particular parameter component $\theta_i$, from $\theta^*_i$, as $s$ varies. The bottom half figures represent the cost associated with the parameter vector $\gamma(s)$, relative to $\theta^*$, as $s$ varies. Panel f provides the normalised time-course of the observed quantities in the model. }
 \end{figure}

\bibliographystyle{plain}
\bibliography{ar2Bib}
\end{document}